\newcommand{\citet}[1]{\cite{{#1}}}
\newcommand{\citep}[1]{\cite{{#1}}}
\title{Approximately Efficient Bilateral Trade with Samples}
\date{}
\newtheorem{theorem}{Theorem}[section]
\newtheorem{lemma}[theorem]{Lemma}
\newtheorem{proposition}[theorem]{Proposition}
\theoremstyle{definition}
\newtheorem{definition}[theorem]{Definition}
\crefname{Program}{Program}{Programs}
\definecolor{linkc}{rgb}{0.1, 0.5, 0.7}
\definecolor{citec}{rgb}{0.6, 0.3, 0.7}
\definecolor{urlc}{rgb}{0.5, 0.1, 0.2}
\newcommand{\notshow}[1]{{}}
\newcommand{\AutoAdjust}[3]{{ \mathchoice{ \left #1 #2  \right #3}{#1 #2 #3}{#1 #2 #3}{#1 #2 #3} }}
\newcommand{\Xcomment}[1]{{}}
\newcommand{\InParentheses}[1]{\AutoAdjust{(}{#1}{)}}
\newcommand{\InBrackets}[1]{\AutoAdjust{[}{#1}{]}}
\DeclareMathOperator*{\argmax}{arg\,max}
\DeclareMathOperator*{\E}{\mathbb{E}}
\newcommand{\indic}[1]{\mathbb{I}\InBrackets{#1}}
\newcommand{\given}{\,\vert\,}
\newcommand{\gft}{\mathtt{GFT}}
\newcommand{\firstbest}{\mathtt{FB}}
\newcommand{\sellerp}{\mathtt{SellerP}}
\newcommand{\buyerp}{\mathtt{BuyerP}}
\newcommand{\sellersample}{\mathtt{SellerSample}}
\newcommand{\buyersample}{\mathtt{BuyerSample}}
\newcommand{\srev}{\mathtt{SPro}}
\newcommand{\suti}{\mathtt{SUti}}
\newcommand{\brev}{\mathtt{BPro}}
\newcommand{\buti}{\mathtt{BUti}}
\newcommand{\wel}{\mathtt{Wel}}
\newcommand{\rev}{\mathtt{Rev}}
\newcommand{\emprev}{\widetilde{\mathtt{Rev}}}
\newcommand{\dist}{F}
\newcommand{\seller}{S}
\newcommand{\valS}{\upsilon_s}
\newcommand{\distS}{F_s}
\newcommand{\empdistS}[1]{\tilde{F}_s^{(#1)}}
\newcommand{\val}{\upsilon}
\newcommand{\prob}{p}
\newcommand{\valB}{\upsilon_b}
\newcommand{\distB}{F_b}
\renewcommand{\ref}{\textcolor{red}{Always cref!}} 
\begin{document}

\begin{titlepage}
\author{Yuan Deng \\ \normalsize Google Research \\ \normalsize\href{mailto:dengyuan@google.com}{dengyuan@google.com} \\
  \and Jieming Mao \\ \normalsize Google Research \\ \normalsize\href{mailto:maojm@google.com}{maojm@google.com} \\
  \and Balasubramanian Sivan \\ \normalsize Google Research \\ \normalsize\href{mailto:balusivan@google.com}{balusivan@google.com} \\
  \and Kangning Wang \\ \normalsize Rutgers University \\ \normalsize\href{mailto:kn.w@rutgers.edu}{kn.w@rutgers.edu} \\
  \and Jinzhao Wu \\ \normalsize Yale University \\ \normalsize\href{mailto:jinzhao.wu@yale.edu}{jinzhao.wu@yale.edu}
}

\maketitle
\thispagestyle{empty}
\setcounter{page}{0}

\begin{abstract}
We study the social efficiency of bilateral trade between a seller and a buyer. In the classical Bayesian setting, the celebrated Myerson–Satterthwaite impossibility theorem states that no Bayesian incentive-compatible, individually rational, and budget-balanced mechanism can achieve full efficiency. As a counterpoint, Deng, Mao, Sivan, and Wang (STOC 2022) show that if pricing power is delegated to the right person (either the seller or the buyer), the resulting mechanism can guarantee at least a constant fraction of the ideal (yet unattainable) gains from trade.

In practice, the agent with pricing power may not have perfect knowledge of the value distribution of the other party, and instead may rely on samples of that distribution to set a price. We show that for a broad class of sampling and pricing behaviors, the resulting market still guarantees a constant fraction of the ideal gains from trade in expectation. Our analysis hinges on the insight that social welfare under sample-based pricing approximates the seller’s optimal revenue---a result we establish via a reduction to a random walk.
\end{abstract}

\end{titlepage}

\section{Introduction}
\subsection{Approximately Efficient Bilateral Trade}
In various economies, sellers and buyers exchange goods for money; such a process is referred to as \emph{bilateral trade}. In a simple and canonical bilateral trade setting, a seller has one indivisible item for sale, and a buyer is interested in acquiring it. If the item is sold, the seller incurs a cost of $\valS$, which represents either a production cost or the disutility of parting with the item. The buyer, in turn, derives a value of $\valB$ from obtaining the item.

We aim to design mechanisms that facilitate such trades and, first, it helps to describe the classical Bayesian model of bilateral trade. In this model, the seller's cost $\valS$ and the buyer's value $\valB$ are independently drawn from known distributions $\distS$ and $\distB$, respectively. Although these distributions $\distS$ and $\distB$ are common knowledge, the actual realized values $\valS$ and $\valB$ remain private to the seller and the buyer, respectively. They interact through a mechanism and have standard quasilinear utilities:
\begin{itemize}
    \item The utility of the buyer is given by his value $\valB$ multiplied by the probability of obtaining the item, minus his expected payment.
    \item The utility of the seller is given by the expected payment she received, minus her cost $\valS$ multiplied by the probability of losing the item.
\end{itemize}
A valid mechanism must satisfy three standard conditions in the field of mechanism design:
\begin{itemize}
    \item \emph{Bayesian Incentive-Compatibility} (BIC): Neither the seller nor the buyer can improve their utility by misreporting their private information.
    \item \emph{Individual Rationality} (IR): Neither the seller nor the buyer should receive negative utility by participating in the mechanism.
    \item \emph{Budget Balance} (BB): The mechanism cannot subsidize trades. In other words, the buyer's payment to the mechanism must at least cover the payment to the seller.
\end{itemize}

Our goal is to design a trading mechanism that maximizes social efficiency. From a utilitarian perspective, if the buyer values an item more than the seller, then a successful trade between them will improve social efficiency. Quantitatively, a trade between a buyer with value $\valB$ and a seller with cost $\valS$ contributes a net utility of $\valB - \valS$ to society. Ideally, a bilateral trade mechanism should guarantee full efficiency, that is, trade should occur whenever the realized value $\valB \sim \distB$ is higher than the realized cost $\valS \sim \distS$.

However, in general, no mechanism that satisfies the three standard conditions can guarantee full efficiency. This fundamental limitation of efficient trading, known as the Myerson--Satterthwaite impossibility theorem \cite{myerson1983efficient}, is a landmark result in the field of mechanism design and a key reference in the 2007 Nobel Prize in Economics.

Influenced by the growing interdisciplinary collaboration in the field of economics and computation, researchers have explored the topic of bilateral trade efficiency through the lens of competitive analysis. Indeed, the \emph{gains from trade} in an ideal situation can be quantified as
\[
\E_{\valB \sim \distB, \valS \sim \distS}[\max(\valB - \valS, 0)],
\]
which we refer to as the \emph{first best}. A natural question is whether a mechanism can guarantee at least a constant fraction of the first best.

The work of \cite{mcafee2008gains} was the first to examine efficient trading mechanisms in this framework and showed that the fixed-price mechanism can guarantee a $\frac{1}{2}$-fraction of the first best when the median of $\distB$ is higher than the median of $\distS$. The work of \cite{DBLP:conf/stoc/DengMSW22} was the first to prove it without additional assumptions, showing that a mechanism can achieve at least a constant fraction of the first best regardless of the distributions $\distB$ and $\distS$. Specifically, it turns out that delegated pricing is sufficient; here, either the buyer or the seller has the power to set the price to maximize their own expected utility. Researchers have since shown that, for any prior distributions $\distB$ and $\distS$, one of these two delegation mechanisms can guarantee at least a $\frac{1}{3.15}$-fraction of the first best \cite{DBLP:conf/wine/Fei22,DBLP:conf/stoc/DengMSW22}.

\subsection{Pricing with Sample Access}
In real-world applications, the entity setting the price often lacks perfect knowledge of the other party's value or cost distribution. Instead, it typically relies on samples from the underlying distribution and uses the observed empirical distribution as a proxy of the true prior. A newcomer to an economy might have only a handful of samples, while a powerful entity in today's data-driven world may have access to millions. Furthermore, the price-setting entity can base its decision only on existing samples, or it can adopt an adaptive strategy and collect additional samples until it determines a price. We admit a broad range of possible behaviors of the price-setting entity, but assume that ultimately, this entity will choose a price to approximately optimize its utility with respect to the empirical distribution formed by samples it has observed so far.

Pricing with sample access has been extensively studied in the literature. In standard pricing and auction settings without additional assumptions, sellers with such behaviors typically fail to achieve a constant approximation to optimal revenue. Likewise, in these settings, the economy as a whole typically fails to achieve a constant approximation to optimal social welfare.

In this article, we explore the extent to which pricing with sample access differs from pricing with full knowledge of the prior distributions in bilateral trade. We ask specifically:
\begin{quote}
Is pricing with sample access in bilateral trade approximately efficient, that is, does it guarantee a constant fraction of gains from trade compared to the first-best outcome?
\end{quote}

\subsection{Our Results}

We answer this question in the affirmative. In our main result (\cref{thm:main} with $c = 1$), we consider the situation in which we delegate the pricing power to the seller or the buyer who only has sample access to the other's prior distribution and can employ a wide range of strategies. We show that the resulting mechanism guarantees at least a $\frac{1}{25.2}$-fraction of the first-best gains from trade, as long as the agent with pricing power chooses to optimize its empirical profit.

A key step in the proof of this result is the observation below (\cref{lem:general_prob} with $\delta = \frac{1}{4}$).
\begin{quote}
Consider a seller (with cost of $0$) who has access to samples of the buyer's value distribution and chooses to optimize her empirical revenue. It always holds in expectation that the resulting \emph{welfare} (with respect to the buyer's true value distribution) is at least a $\frac{1}{8}$-fraction of the optimal \emph{revenue} (again, with respect to the buyer's true value distribution).
\end{quote}
Note that we are showing that the resulting \emph{welfare} can approximate the optimal \emph{revenue}. This observation is in contrast to the facts that (1) the resulting welfare cannot guarantee a constant fraction of the public-prior-case welfare, let alone the optimal welfare, and (2) the resulting revenue cannot guarantee a constant fraction of the public-prior-case revenue, which is exactly the optimal revenue. (See \cref{appendix:wwrr} for proofs.) We think that this observation can be of independent interest. To prove it, we formulate our problem in the language of random walks and provide characterizations of its worst-case instances. The details are in \cref{subsec:worst_case_shape,subsec:integer_supp,subsec:general_supp}.

Finally, we consider the robustness of our results in the full version of \cref{thm:main}, and show that the constant-factor approximation is robust to alternative pricing strategies. Specifically, even if the price-setting entity only \emph{approximately} optimizes its empirical profit, our constant-factor efficiency guarantee still holds.
 
\subsection{Further Related Work}
\paragraph{Bilateral Trade.} There is a large body of literature on bilateral trade since the seminal work of \citet{myerson1983efficient}. We do not intend to be comprehensive, but instead we mention some of the works that are most related to ours. 

The fixed-price mechanism is one that announces a price $p$ and lets the seller and the buyer trade at price $p$ if the buyer's value is higher and the seller's cost is lower. The work of \citet{mcafee2008gains} shows that the fixed-price mechanism achieves a $\frac{1}{2}$-approximation of the first best (gains from trade) when the median of the buyer's value (which is a random variable) is at least the median of the seller's cost. \citet{DBLP:journals/geb/BlumrosenD21} show that the fixed-price mechanism can achieve a $\frac{1}{e}$-approximation of the first best when the buyer's value distribution has monotone hazard rate. However, for general instances, \citet{DBLP:conf/wine/BlumrosenM16} show that the fixed-price mechanism cannot always achieve a constant-approximation. Instead, the random-offerer mechanism, which uniformly randomly delegates the pricing power to the seller or the buyer, can provide such a guarantee \citep{DBLP:conf/stoc/DengMSW22}. The state-of-the-art approximation ratio of $\frac{1}{3.15}$ to the first-best is obtained by the delegation mechanisms  \citep{DBLP:conf/wine/Fei22}, and it is known that no mechanism can guarantee a ratio better than $\frac{2}{e}$ \citep{DBLP:conf/wine/BlumrosenM16}.

In addition to bilateral trade with independent valuations, efficiency approximations in bilateral trade have also been studied in settings with double auctions and matching markets (e.g., \citep{DBLP:conf/sigecom/BrustleCWZ17,DBLP:conf/sigecom/Babaioff0GZ18,DBLP:conf/soda/CaiGMZ21,DBLP:conf/stoc/0001LMZ24}), with correlated valuations~\citep{MotocCorrelated2021,DBLP:conf/stoc/DobzinskiS24}, and with a broker~\citep{DBLP:conf/soda/HajiaghayiHPS25}.
In addition to the objective of maximizing gains from trade, the different perspective of welfare maximization has also been extensively studied. \citet{DBLP:journals/geb/BlumrosenD21} provide a $(1 - \frac{1}{e})$-approximation to the first-best welfare. \citet{DBLP:conf/soda/KangPV22} improve the approximation ratio to $(1 - \frac{1}{e} + 10^{-4})$; and recently in a pair of independent works, \citet{DBLP:conf/stoc/CaiW23} and \citet{DBLP:conf/stoc/LiuRW23} further improve the ratio to $0.72$ with an upper bound of $0.7381$. Sample-based bilateral trade has received increasing attention recently for welfare optimization using the fixed-price mechanism. \citet{DBLP:conf/stoc/DuttingFLLR21} demonstrate a $\frac{1}{2}$-approximation by using a single sample from the seller's cost distribution as the posted price, and \citet{DBLP:conf/soda/KangPV22} show a $\frac{3}{4}$-approximation in a symmetric setting. \citet{DBLP:conf/stoc/CaiW23} propose a new family of sample-based fixed-price mechanisms and provide a complete characterization of their approximation ratios for any fixed number of samples. In contrast, our work focuses on developing sampled-based mechanisms for approximating the first-best gains from trade, which is a mathematically more challenging objective to provide approximations guarantees for.

\paragraph{Pricing and Auctions with Sample Access.} Our work also sits in the area of sample-based mechanism design with applications in pricing and auctions. Existing literature on sample-based mechanism design mostly focuses on two regimes: (1) determine the number of samples required to design a $(1-\varepsilon)$-optimal mechanism~\citep{cole2014sample,mohri2014learning,morgenstern2015pseudo,morgenstern2016learning,syrgkanis2017sample,cai2017learning,guo2019settling,brustle2020multi,guo2020sample,gonczarowski2021sample,balcan2023generalization}; and (2) determine the best approximation ratio a sample-based mechanism can obtain using a small fixed number of samples~\citep{dhangwatnotai2010revenue,babaioff2018two,daskalakis2020more,10.1145/3465456.3467572,DBLP:conf/stoc/DuttingFLLR21,DBLP:conf/stoc/CaiW23}. In comparison to our work, all these works consider a sampling process fully controlled by the mechanism designer, while our work assumes that the sampling process is delegated to either the seller or the buyer, and they can employ a broad range of sampling strategies.

\section{Preliminaries}
\subsection{Bilateral Trade}
We study the bilateral trade problem in which a seller and a buyer trade an indivisible item. The seller, denoted by \(\seller\), initially holds the item and incurs a private cost \(\valS\), drawn from a distribution \(\distS\), if the item is sold. Similarly, the buyer assigns a private value \(\valB\) to the item, which is drawn from a distribution \(\distB\). Both \(\valS\) and \(\valB\) are private information known only to the seller and the buyer, respectively.

We are interested in analyzing the \emph{gains from trade} achieved by a mechanism. Let \(x(\valS, \valB)\) be the probability of trade between the seller with cost \(\valS\) and the buyer with value \(\valB\) in a given mechanism. We define gains from trade as the expected improvement in total utility resulting from the trade, that is,
\[
    \gft = \E_{\valS \sim \distS, \, \valB \sim \distB}\left[(\valB - \valS) \cdot x(\valS, \valB)\right].
\]
Ideally, trade occurs when \(\valB > \valS\) and does not occur when \(\valS > \valB\). Therefore, the maximum possible gains from trade, called the \emph{first best}, can be expressed as
\[
\firstbest = \E_{\valS \sim \distS, \, \valB \sim \distB}\InBrackets{(\valB - \valS) \cdot \indic{\valB > \valS}}.
\]

In this article, we compare the gains from trade of the mechanism we propose with those of the \emph{seller-pricing mechanism} and the \emph{buyer-pricing mechanism}.

In the seller-pricing mechanism, the seller posts a take-it-or-leave-it price \(r_{\valS}\) based on her private cost \(\valS\) in order to maximize her utility. The buyer observes the price and chooses to purchase the item if and only if \(\valB \geq r_{\valS}\). The gains from trade in the seller-pricing mechanism, denoted by \(\sellerp\), can be expressed as
\[
\sellerp = \E_{\valS \sim \distS, \, \valB \sim \distB}\left[(\valB - \valS) \cdot \indic{\valB \geq r_{\valS}}\right], \quad \text{where} \quad r_{\valS} \in \argmax_p (p - \valS)\cdot \Pr_{\valB \sim \distB}[\valB \geq p].
\]

The buyer-pricing mechanism is symmetric to the seller-pricing mechanism. The key difference is that now it is the buyer who posts a price \(r'_{\valB}\) based on his private value \(\valB\), and the seller then decides whether to accept that price. Similarly, the gains from trade in the buyer-pricing mechanism, denoted by \(\buyerp\), can be expressed as
\[
\buyerp = \E_{\valS \sim \distS, \, \valB \sim \distB}\left[(\valB - \valS) \cdot \indic{\valS \leq r'_{\valB}}\right], \quad \text{where} \quad r'_{\valB} \in \argmax_p (\valB - p)\cdot \Pr_{\valS \sim \distS}[\valS \leq p].
\]
It is shown in the work of \citep{DBLP:conf/stoc/DengMSW22,DBLP:conf/wine/Fei22} that either the seller-pricing mechanism or the buyer-pricing mechanism can approximate the first best, the optimal gains from trade. Specifically, we have the following guarantee.
\[
3.15 \cdot \max(\sellerp, \buyerp) \geq \firstbest.
\]

\subsection{Agents' Behavior with Sample Access}
In this article, we assume that the entity that sets the price only has sample access to the prior distribution of the other party. For example, suppose that we are running the seller-pricing mechanism and the seller has the power to set a price. In this case, the seller knows her private cost \(\valS\) but lacks the knowledge of the prior distribution $\distB$ of the buyer, and therefore she chooses her price based on samples from $\distB$ instead. The seller can repeatedly draw i.i.d.\@ samples from the buyer's distribution $\distB$ and can adaptively decide when to stop sampling and select a price. We make the following two assumptions about the seller's behavior.
\begin{enumerate}
    \item She observes at least one sample above her private cost before setting her price. (An alternative assumption is that she sets the price at her cost if she fails to observe any sample above her cost.)
    \item When she sets the price based on the samples she has observed so far, the empirical profit of the chosen price must be at least a $c$-fraction of the empirical profit of any price. Here, the empirical profit refers to the profit evaluated on the collection of samples of $\valB \sim \distB$ that the seller has observed.
\end{enumerate}
\cref{def:SellerEO} below captures these assumptions formally.

\begin{definition}[Seller \(c\)-EO]
\label{def:SellerEO}
Suppose that the seller with private cost \(\valS\) is trying to decide her price by adaptively observing samples from the value distribution $\distB$ of the buyer. Let \(\valB^{(1)}, \valB^{(2)}, \valB^{(3)}, \ldots \overset{\mathrm{i.i.d.}}{\sim} \distB\) be the stream of samples drawn independently from \(\distB\). Assume that the seller determines the price \(r_s\) after drawing \(k\) samples, where \(k\) may be adaptively chosen based on the seller's strategy and the sample stream. For any constant $c \in (0,1]$, the pricing strategy is said to be \emph{\(c\)-EO} (``EO'' stands for ``empirically optimal'') for the seller if and only if the following two conditions hold.
\begin{itemize}
    \item \(\exists i\in[k] \text{ s.t. } \valB^{(i)} > \valS\): At least one sample has a value greater than \(\valS\).\footnote{In the case in which no sample has value larger than \(\valS\), the seller effectively learns no information from the buyer's distribution for setting the price. Therefore, we omit this case from consideration.}
    \item \((r_s - \valS) \cdot \frac{\left|\left\{i\in [k] \,\big|\, \valB^{(i)} \geq r_s\right\}\right|}{k} \geq  c \cdot\max_{p\in \mathbb{R}} (p - \valS) \cdot  \frac{\left|\left\{i\in [k] \,\big|\, \valB^{(i)} \geq p\right\}\right|}{k}  \): The empirical profit of \(r_s\) is at least a \(c\)-fraction of the optimal empirical profit.
\end{itemize}
\end{definition}

Consider the following variant of the seller-pricing mechanism in which the seller sets the take-it-or-leave-it price to be \(r_s\) chosen by any \(c\)-EO pricing strategy and the buyer decides whether to purchase based on the condition that \(\valB\geq r_s\). The gains from trade generated by this mechanism are represented as \(c\text{-}\sellersample\), defined as
\[
    c\text{-}\sellersample = \E_{\valB\sim \distB, \valS\sim\distS}\InBrackets{\InParentheses{\valB - \valS}\indic{\valB\geq r_s}}.
\]

Alternatively, when the buyer has the power to set the price, his behavior is symmetric to that of the seller. The buyer can draw samples from the seller's distribution $\distS$ and adaptively use the empirical distribution to determine the price. We similarly require that the buyer observes at least one sample with a cost no greater than his private value and selects a price such that the empirical utility of the buyer is at least a \(c\)-fraction of the optimal utility.

\begin{definition}[Buyer \(c\)-EO]
\label{def:BuyerEO}
Suppose that a buyer with private value \(\valB\) has access to a stream of samples \(\valS^{(1)}, \valS^{(2)}, \valS^{(3)}, \ldots \overset{\mathrm{i.i.d.}}{\sim} \distS\) and uses them to select a price \(r_b\). Assume that the buyer sets the price \(r_b\) after observing \(k\) samples, where \(k\) is adaptively determined based on the buyer's strategy and the observed sample stream sampled from the distribution of the seller. For any constant $c \in (0,1]$, the pricing strategy is said to be \(c\)-EO for the buyer if and only if the following two conditions are satisfied:
\begin{itemize}
    \item \(\exists i\in [k] \text{ s.t. } \valS^{(i)} < \valB\): At least one sample has a cost less than \(\valB\).\footnote{Similarly, we omit the case in which no sample can be lower than \(\valB\).}
    \item \((\valB - r_b)\cdot \frac{\left|\left\{i\in [k] \,\big|\, \valS^{(i)}\leq r_b\right\}\right|}{k} \geq c \cdot \max_{p\in \mathbb{R}} (\valB - p)\cdot \frac{\left|\left\{i\in [k] \,\big|\, \valS^{(i)}\leq p\right\}\right|}{k}\): The empirical utility of \(r\) is at least a \(c\)-fraction of the optimal empirical utility.
\end{itemize}
\end{definition}

Similarly, we consider the variant of the buyer-pricing mechanism in which the buyer sets the price to be \(r_b\) decided by the pricing strategy. We use \(c\text{-}\buyersample\) to represent the gains from trade induced by such mechanism.
\[
    c\text{-}\buyersample = \E_{\valB\sim \distB, \valS\sim\distS}\InBrackets{\InParentheses{\valB - \valS}\indic{\valS\leq r_b}}.
\]

\section{Main Result}

We now present our main theorem, demonstrating that when both the seller and the buyer employ \(c\)-EO pricing strategies, at least one of the two pricing mechanisms provides an \(O(c)\)-approximation to the optimal gains from trade. The main result is formally stated as follows.
\begin{theorem}
\label{thm:main}
Fix the distributions $\distS$ and $\distB$ of the seller and the buyer. Suppose that the seller and the buyer determine their prices according to arbitrary \(c\)-EO pricing strategies for a constant $c \in (0, 1]$. Let \(c\text{-}\sellersample\) and \(c\text{-}\buyersample\) respectively represent the gains from trade induced by the seller's and buyer's \(c\)-EO pricing strategies. It holds that
\[
\frac{25.2}{c} \cdot\max\InParentheses{c\text{-}\sellersample, c\text{-}\buyersample} \geq \firstbest.
\]
As an important special case, when the seller and the buyer use $1$-EO (exactly empirically optimal) strategies, then
\[
25.2 \cdot\max\InParentheses{1\text{-}\sellersample, 1\text{-}\buyersample} \geq \firstbest.
\]
\end{theorem}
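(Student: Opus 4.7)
The plan is to obtain the factor $25.2 = 8 \cdot 3.15$ by composing two ingredients: the key observation \cref{lem:general_prob} (sample-based welfare is at least a $\frac{1}{8}$-fraction of the optimal revenue, which scales to $\frac{c}{8}$ for $c$-EO strategies), and the cited $\frac{1}{3.15}$-approximation of the delegated pricing mechanisms.

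First I would reduce the cost-$\valS$ problem to the cost-$0$ case via a shift. For any fixed realization of $\valS$, the seller's empirical objective $(p-\valS)\cdot\hat{F}_b(p)$ becomes, under the substitution $q := p-\valS$, exactly $q$ times the empirical survival of $\max(\valB-\valS,0)$. The two $c$-EO conditions translate cleanly (``at least one sample above $\valS$'' becomes ``at least one positive sample''), so \cref{lem:general_prob} applied to the shifted problem yields, after unshifting, the pointwise bound
\[
\E_{\text{samples}}\InBrackets{\E_{\valB \sim \distB}\InBrackets{(\valB-\valS)\indic{\valB \geq r_s}}} \;\geq\; \frac{c}{8}\cdot\max_{p}\,(p-\valS)\cdot\Pr_{\valB \sim \distB}[\valB \geq p]
\]
for every $\valS$. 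Taking $\E_{\valS \sim \distS}$ of both sides yields $c\text{-}\sellersample \geq \frac{c}{8}\,R_s$, where $R_s := \E_{\valS}\InBrackets{\max_p (p-\valS)\Pr_{\valB}[\valB \geq p]}$ is the expected full-prior seller profit. A symmetric derivation gives $c\text{-}\buyersample \geq \frac{c}{8}\,R_b$ for the corresponding buyer quantity $R_b$.

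The remaining step is to establish the revenue-phrased bound $3.15\cdot\max(R_s, R_b) \geq \firstbest$. The cited $\frac{1}{3.15}$-approximation is stated in terms of $\sellerp$ and $\buyerp$, both of which can in principle be unboundedly larger than $R_s$ and $R_b$ (e.g., under seller-pricing the buyer's consumer surplus can dwarf the seller's revenue for heavy-tailed distributions). Nevertheless, the existing DMSW and Fei proofs typically bound $\firstbest$ by decomposing it into a seller-side piece dominated by $R_s$ and a buyer-side piece dominated by $R_b$, so the revenue form should either follow directly from those proofs or be provable via an independent decomposition of $\firstbest = \E[(\valB-\valS)_+]$ along the same lines; this is the step I expect to require the most care. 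Combining the revenue version of the $3.15$-bound with the two previous inequalities gives $\frac{25.2}{c}\max(c\text{-}\sellersample, c\text{-}\buyersample) \geq \firstbest$, as desired.
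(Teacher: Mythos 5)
Your proposal is correct and follows essentially the same route as the paper: reduce to the cost-$0$ case by shifting, apply \cref{lem:general_prob} with $\delta = c/4$ to obtain $c\text{-}\sellersample \geq \frac{c}{8}\cdot\srev$ (and symmetrically for the buyer), and compose with the revenue-phrased $3.15$-approximation to get $8 \cdot 3.15 = 25.2$. The one step you flag as uncertain---that $3.15\cdot\max(\srev,\brev) \geq \firstbest$ holds for the \emph{profit} terms rather than the full gains from trade---is exactly what the paper states as \cref{lem:pricing_efficency}, taken as a known ``stronger claim'' from the cited works of Deng--Mao--Sivan--Wang and Fei rather than reproved.
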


Our proof proceeds as follows. Observe that, for the seller-pricing mechanism, the gains from trade
\[
\E\InBrackets{(\valB - \valS) \cdot \indic{\valB \geq r_{\valS}}}
\]
can be decomposed into two components:
\begin{itemize}
    \item the profit of the seller, \( \E[(r_{\valS} - \valS) \cdot \indic{\valB \geq r_{\valS}}] \), and
    \item the utility of the buyer, \( \E[(\valB - r_{\valS}) \cdot \indic{\valB \geq r_{\valS}}] \).
\end{itemize}
We denote these two components as \(\srev\) and \(\suti\) respectively, and define them as follows.
\begin{align*}
\begin{split}
    \srev &= \E_{\valS\sim \distS, \valB\sim \distB}\left[(r_{\valS} - \valS) \cdot \indic{\valB \geq r_{\valS}}\right],\\
    \suti &= \E_{\valS\sim \distS, \valB\sim \distB}\left[(\valB - r_{\valS}) \cdot \indic{\valB \geq r_{\valS}}\right].
\end{split}
\end{align*}
Similarly, we define analogous terms for the buyer-pricing mechanism as follows.
\begin{align*}
\begin{split}
    \brev &= \E_{\valS\sim \distS, \valB\sim \distB}\left[(\valB - r_{\valB}) \cdot \indic{\valS \leq r_{\valB}}\right],\\
    \buti &= \E_{\valS\sim \distS, \valB\sim \distB}\left[(r_{\valB} - {\valS}) \cdot \indic{\valS \leq r_{\valB}}\right].
\end{split}
\end{align*}

Existing results \cite{DBLP:conf/stoc/DengMSW22,DBLP:conf/wine/Fei22} on the efficiency of the seller-pricing and buyer-pricing mechanisms provide a stronger claim: the seller's profit from the seller-pricing mechanism, combined with the buyer's profit from the buyer-pricing mechanism, suffices to approximate the optimal gains from trade.
\begin{lemma}[\cite{DBLP:conf/wine/Fei22}]
\label{lem:pricing_efficency}
For any $\distS$ and $\distB$, we have
\[
3.15 \cdot \max(\srev, \brev) \geq \firstbest.
\]
\end{lemma}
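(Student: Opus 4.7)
The plan is to work with the tail representation $\firstbest = \int_0^\infty \distS(t)\,(1-\distB(t))\,\d t$ and with two elementary lower bounds on $\srev$ and $\brev$ obtained by suboptimal but feasible posted prices. For any $p$, the seller with cost $\valS \leq p$ may always post the price $p$, which yields
\[
\srev \;\geq\; (1-\distB(p)) \cdot \int_0^p \distS(s)\,\d s,
\]
and symmetrically $\brev \geq \distS(p) \cdot \int_p^\infty (1-\distB(s))\,\d s$. Combined with the pointwise inequality $(\valB-\valS)^+ \leq (\valB-p)^+ + (p-\valS)^+$, which after taking expectation over independent $\valS,\valB$ gives $\firstbest \leq \int_0^p \distS(s)\,\d s + \int_p^\infty (1-\distB(s))\,\d s$ for every $p$, these bounds already deliver a weak constant-factor approximation.

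To sharpen this, I would first pick the balancing threshold $t^*$ defined by $\distS(t^*) = 1-\distB(t^*)$ and split $\firstbest$ according to whether $\valS,\valB$ lie on opposite sides of $t^*$ or on the same side. A short computation using independence shows that the ``cross'' piece $\{\valS \leq t^* \leq \valB\}$ contributes exactly $\distS(t^*)\int_{t^*}^\infty (1-\distB(s))\,\d s \;+\; (1-\distB(t^*))\int_0^{t^*} \distS(s)\,\d s$, and each of these two terms is dominated by $\brev$ and $\srev$ respectively via the pointwise bounds above evaluated at $p=t^*$. So the cross contribution to $\firstbest$ is at most $\srev + \brev$.

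The main obstacle is the two ``same-side'' regions, where $\valS,\valB$ are both $\leq t^*$ or both $\geq t^*$: each is itself a bilateral-trade first best for the truncated distributions, but the seller's and buyer's original pricing strategies need not be optimal for these subinstances. I would handle them either by iterating the threshold argument along a geometric sequence of cutoffs, so that the residual first-best shrinks by a constant factor at each level while only a bounded amount of $\srev+\brev$ is accumulated, or by replacing the single threshold by a continuous parameter and integrating the elementary inequality $\firstbest \leq \srev/(1-\distB(p)) + \brev/\distS(p)$ against a carefully chosen weight in $p$. The second route is what yields tight constants, and identifying the optimal weight is the analytic crux of the argument.

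Finally, to avoid losing a spurious factor of $2$ when converting a bound of the form $\alpha\srev+\beta\brev \geq \firstbest$ into the claimed $3.15\cdot\max(\srev,\brev) \geq \firstbest$, I would run the argument asymmetrically---once tuned to charge the residual mass toward $\srev$, once toward $\brev$---and take the better of the two. The remaining step is to verify that the resulting small one-parameter optimization indeed yields the constant $3.15$; I expect this to be tedious but routine. The conceptually hardest part is controlling the same-side contributions, since any slack there degrades the final constant directly.
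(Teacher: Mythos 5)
This lemma is not proved in the paper at all---it is imported as a black box from \citet{DBLP:conf/wine/Fei22} (building on \citet{DBLP:conf/stoc/DengMSW22}), so the only fair comparison is against those external proofs, and against them your proposal has a genuine gap rather than a complete alternative. The parts you do carry out are correct and standard: the lower bounds $\srev \geq (1-\distB(p))\int_0^p \distS(s)\,\d s$ and its mirror image, the decomposition at a balancing threshold $t^*$, and the observation that the cross term $\{\valS \leq t^* \leq \valB\}$ is charged to $\srev + \brev$. But the same-side regions are not a technical残余 to be ``handled''---they are the entire content of the theorem. A single fixed threshold provably cannot work: \citet{DBLP:conf/wine/BlumrosenM16} show that no fixed-price mechanism achieves a constant fraction of $\firstbest$, so any argument that only invokes the posted-price lower bounds at one price $p$ must fail, and everything rests on the iteration or weighting scheme you describe only in outline. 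Neither of your two candidate routes is executed. For the geometric-cutoff route, the step ``only a bounded amount of $\srev+\brev$ is accumulated'' is precisely the delicate charging argument of Deng--Mao--Sivan--Wang, and it is not automatic that the per-level charges $(1-\distB(p_i))\int_0^{p_i}\distS$ sum to $O(\srev)$ rather than to $\Omega(\text{number of levels})\cdot\srev$. For the weighted-integration route, the inequality $\firstbest \leq \srev/(1-\distB(p)) + \brev/\distS(p)$ is correct pointwise, but no weight is exhibited, and one must verify that a single weight keeps both integrals uniformly bounded over all distribution pairs---this is where the real optimization lives, and it is far from a ``routine one-parameter optimization'': obtaining any constant was the main theorem of a STOC paper, and improving it to $3.15$ was the main theorem of a second paper.

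In short: your sketch correctly reproduces the easy half of the argument (the cross term) and correctly identifies where the difficulty is, but it defers exactly the step that constitutes the theorem, and it gives no derivation of (or even a plausibility argument for) the constant $3.15$. As a proof it is incomplete; as a plan it is reasonable but would require essentially reconstructing the Fei/DMSW charging analysis to close.
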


When the seller or buyer lacks full knowledge of the distributions and instead employs a \(c\)-EO pricing strategy to determine 
their prices in the seller-pricing or buyer-pricing mechanisms, we aim to demonstrate that the gains from trade generated 
by these mechanisms are sufficient to cover the \emph{profits} achieved in the original seller-pricing or buyer-pricing mechanisms.

\begin{theorem}
\label{thm:sample_approximate_revenue}
Assume that the seller and the buyer are running any \(c\)-EO pricing strategies to determine their prices for some arbitrary constant $c \in (0,1]$. Let \(c\text{-}\sellersample\) and \(c\text{-}\buyersample\) respectively denote the gains from trade resulting from these \(c\)-EO strategies. Then we have
\[
c\text{-}\sellersample \geq \frac{c}{8}\cdot \srev \quad \text{and} \quad c\text{-}\buyersample\geq \frac{c}{8}\cdot \brev.
\]
\end{theorem}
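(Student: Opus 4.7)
The plan is to reduce \cref{thm:sample_approximate_revenue} to the paper's key welfare-versus-revenue comparison (\cref{lem:general_prob}), which handles a canonical zero-cost seller facing i.i.d.\ samples from a single buyer value distribution. By symmetry, it suffices to prove the seller inequality $c\text{-}\sellersample \geq \frac{c}{8} \srev$; the buyer inequality then follows by interchanging roles.

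I condition on $\valS = v$ and prove the bound pointwise in $v$, from which the full statement follows by taking expectation over $\valS \sim \distS$. Let $q = \Pr_{\valB \sim \distB}[\valB \geq v]$; when $q = 0$, the $c$-EO conditions cannot be met and both sides vanish, so I may assume $q > 0$. Let $G$ denote the law of $\valB - v$ conditional on $\valB \geq v$, supported on $[0,\infty)$, and set $s = r_s - v$. Since $s > 0$ is forced by the second bullet of \cref{def:SellerEO}, a direct computation gives
\[
\E_{\valB}\InBrackets{(\valB - v)\,\indic{\valB \geq r_s}} = q \cdot \E_{Z \sim G}\InBrackets{Z\,\indic{Z \geq s}}, \qquad \max_{p}\, (p-v)\Pr[\valB \geq p] = q \cdot \max_{t \geq 0}\, t \cdot \Pr_{Z \sim G}[Z \geq t].
\]
Cancelling $q$ reduces the pointwise inequality to a statement about the shifted distribution $G$ alone.

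Next, I verify that the shifted price $s$ is itself $c$-EO for a zero-cost seller facing value distribution $G$. Any sample $\valB^{(i)} \leq v$ contributes zero to the empirical profit at every price $p > v$ and hence does not affect the $c$-EO comparison; the samples with $\valB^{(i)} > v$, after subtracting $v$, are i.i.d.\ draws from $G$, and by the first bullet of \cref{def:SellerEO} at least one such draw exists. Substituting these observations transforms the $c$-EO inequality for $r_s$ verbatim into the analogous $c$-EO inequality for $s$ with respect to samples from $G$. Invoking \cref{lem:general_prob} in its $c$-EO form then yields
\[
\E_{\text{samples}}\InBrackets{\E_{Z \sim G}\InBrackets{Z\,\indic{Z \geq s}}} \geq \frac{c}{8} \cdot \max_{t \geq 0}\, t \cdot \Pr_{Z \sim G}[Z \geq t].
\]
Multiplying back by $q$ and integrating over $v \sim \distS$ yields $c\text{-}\sellersample \geq \frac{c}{8}\,\srev$. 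The buyer inequality is obtained by the symmetric reduction: condition on $\valB = v$, pass to the distribution of $v - \valS$ given $\valS \leq v$, and apply \cref{lem:general_prob} to the resulting zero-cost problem from the buyer's perspective.

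The reduction above is essentially a routine change of variables and a bookkeeping check that the $c$-EO property is preserved under shifting. Consequently, the main obstacle is not in the argument I have outlined but in \cref{lem:general_prob} itself, whose proof requires the random-walk reformulation and the structural characterization of worst-case instances developed in the later sections of the paper.
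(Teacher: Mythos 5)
Your reduction follows essentially the same path as the paper's proof: fix the seller's cost $v$, transfer the problem to a single-distribution zero-cost pricing problem, invoke \cref{lem:general_prob} with $\delta = c/4$ (giving failure probability at most $1/2$, hence expected welfare at least $\tfrac12\cdot\tfrac{c}{4}=\tfrac{c}{8}$ of optimal revenue), and integrate over $v\sim\distS$. The only cosmetic difference is that the paper shifts $\distB$ by $v$ and applies \cref{lem:general_prob} to the shifted (possibly partly negative) distribution directly, whereas you additionally condition on $\valB\geq v$ and carry the factor $q$ through before cancelling it; both routes lead to the same application of the key lemma.
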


\cref{thm:sample_approximate_revenue} serves as our main technical result, and its proof is deferred to \Cref{sec:walk}. Combining \Cref{lem:pricing_efficency} and \Cref{thm:sample_approximate_revenue}, we can directly establish our main theorem.

\begin{proof}[Proof of \Cref{thm:main}]
First, note that \Cref{thm:sample_approximate_revenue} establishes the following bounds.
\begin{align}\label{eq:sample_approximate_revenue}
c\text{-}\sellersample \geq \frac{c}{8}\cdot \srev \quad \text{and} \quad c\text{-}\buyersample\geq \frac{c}{8}\cdot \brev.
\end{align}

Additionally, \Cref{lem:pricing_efficency} states
\begin{align}\label{eq:pricing_efficency}
    3.15 \cdot \max(\srev, \brev) \geq \firstbest.
\end{align}

Combining \cref{eq:sample_approximate_revenue,eq:pricing_efficency}, we derive
\begin{align*}
    \frac{25.2}{c} \cdot \max\left(c\text{-}\sellersample, c\text{-}\buyersample\right) \geq \firstbest.
\end{align*}
This completes the proof.
\end{proof}
 
\section{Sample-Based Welfare Approximates Optimal Revenue}
\label{sec:walk}

In this section, we complete our proof of \Cref{thm:sample_approximate_revenue} and aim to show that
\[
c\text{-}\sellersample \geq \frac{c}{8} \cdot \srev.
\] 
We only focus on the case in which the seller has the pricing power, since the proof for the buyer-pricing case is completely symmetric.

To establish this approximation of $c\text{-}\sellersample$ to $\srev$, we begin by fixing the seller's cost, \(\valS\), and normalizing it to \(0\). Note that this normalization reduces our problem to the one in \cref{def:welfare_apx_revenue} below.

\begin{definition}[Sample-Based Welfare Approximates Optimal Revenue]
\label{def:welfare_apx_revenue}
Consider an unknown distribution \(\dist\) supported on \(\mathbb{R}\), with the assumption that the optimal revenue derived from this distribution is positive. Specifically, we have
\[
\max_{p \in \mathbb{R}} p \cdot \Pr_{\val \sim \dist}\left[\val \geq p\right] > 0.
\]
The seller is allowed to sample repeatedly from the distribution $F$. She may choose to stop at any time and post an arbitrary price \(r\),  subject to the two requirements of \(c\)-EO.
\begin{itemize}
    \item The empirical revenue achieved at price \(r\) must be at least a \(c\)-fraction of the optimal empirical revenue.
    \item The optimal empirical revenue at the stopping point must be positive. 
\end{itemize}

Fix the distribution \(\dist\). We define \(\wel(p)\) and \(\rev(p)\) as the welfare and the revenue induced by posting a price \(p\), respectively. Specifically, we have
\begin{align*}
    \wel\InParentheses{p} = \E_{\val \sim \dist}\InBrackets{\val\cdot \indic{\val \geq p}} \quad \text{and} \quad \rev\InParentheses{p} = p \cdot \Pr_{\val\sim \dist}\InBrackets{\val \geq p}.
\end{align*}

Our objective is to show that the welfare at the price \(r\) chosen by the seller is at least a constant fraction of the optimal revenue. Formally, we aim to prove
\[
\wel\InParentheses{r} \geq \Omega(c) \cdot \max_{p\in \mathbb{R}} \rev\InParentheses{p}.
\]
\end{definition}

\cref{lem:general_prob} below asserts that, with probability at least \(1 - 2\delta/c\), the seller will set a price such that the welfare is no less than \(\delta\) times the optimal revenue, for any \(\delta \in (0, 1)\) and any strategy employed by the seller.

\begin{theorem}
\label{lem:general_prob}
For any $\delta > 0$ and \(c\in [2\delta, 1]\), in all instances and for all \(c\)-EO strategies of the seller, the probability that she picks a price with welfare less than $\delta$ times the optimal revenue is at most $2\delta/c$.
\end{theorem}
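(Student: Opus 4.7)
The plan is to reduce the statement to bounding a first-passage probability for a random walk with negative drift. First, I fix the seller's cost at $0$ as in \cref{def:welfare_apx_revenue}, let $p^* \in \argmax_p \rev(p)$ be an optimal monopoly price, and define the threshold $r_0 := \inf\{r : \wel(r) < \delta \cdot \rev(p^*)\}$. Since $\wel$ is non-increasing in $r$ and $\wel(p^*) \geq p^* \Pr_{\val \sim \dist}[\val \geq p^*] = \rev(p^*) \geq \delta \cdot \rev(p^*)$, we have $r_0 \geq p^*$. Consequently any ``bad'' price $r$ that the seller picks---one with $\wel(r) < \delta \cdot \rev(p^*)$---must satisfy $r \geq r_0 \geq p^*$.

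Next, I translate the event that the seller picks a bad price into a crossing event for a random walk. Writing $N_k(p) := |\{i \leq k : \val^{(i)} \geq p\}|$ and letting $\tau$ be the seller's (adapted) stopping time, the $c$-EO condition applied against $p^*$ gives $r \cdot N_\tau(r) \geq c\, p^* \cdot N_\tau(p^*)$. Since $r \geq r_0$, every sample with $\val^{(i)} \geq r$ contributes at least $r$ to $\sum_i \val^{(i)} \indic{\val^{(i)} \geq r_0}$, so
\[
\sum_{i=1}^\tau \val^{(i)} \indic{\val^{(i)} \geq r_0} \;\geq\; r \cdot N_\tau(r) \;\geq\; c p^* \cdot N_\tau(p^*).
\]
Define the random walk $S_k := \sum_{i=1}^k \bigl(\val^{(i)} \indic{\val^{(i)} \geq r_0} - c p^* \indic{\val^{(i)} \geq p^*}\bigr)$. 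The displayed inequality says $S_\tau \geq 0$ on the bad event. The i.i.d.\ increments of $S_k$ have mean $\wel(r_0) - c \cdot \rev(p^*) \leq (\delta - c) \cdot \rev(p^*) \leq -\delta \cdot \rev(p^*) < 0$ by the hypothesis $c \geq 2\delta$. Since $\tau \geq 1$ is a stopping time, the probability of picking a bad price is at most $\Pr[\sup_{k \geq 1} S_k \geq 0]$.

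The main technical step is to bound this supremum probability by $2\delta / c$. My plan is a worst-case analysis over distributions $\dist$ consistent with the constraints $\wel(r_0) \leq \delta \cdot \rev(p^*)$ and $p^* \in \argmax_p \rev(p)$. I expect the extremal distribution to take a very simple (finitely-supported) form---for instance, supported on $\{0, p^*, R\}$ for a single atom $R \geq r_0$, with probabilities tuned to saturate the constraints---reducing the problem to the first-passage probability of a simple integer-valued walk for which a direct calculation yields the $2\delta/c$ bound. The reduction to such an extremal distribution would proceed by mass-shifting arguments that preserve the means of the positive and negative parts of the increment while (weakly) increasing the supremum probability; this is presumably the content of the subsections referenced in the excerpt on worst-case shape, integer support, and general support.

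The principal obstacle is exactly this worst-case reduction plus the ensuing first-passage analysis. A notable subtlety is that the positive jumps of $S_k$ can be unbounded when $\dist$ has a heavy right tail, so exponential-tilting or Chernoff-type arguments do not directly apply; the worst-case reduction sidesteps this by passing to a bounded or finitely-supported increment distribution. The factor of $2$ in the bound $2\delta/c$ (beyond the ratio $\delta/c$ suggested by the naïve Markov estimate $\E[\val \indic{\val \geq r_0}] / \E[c p^* \indic{\val \geq p^*}]$) reflects an overshoot correction characteristic of random-walk suprema.
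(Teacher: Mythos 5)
Your reduction to a random walk is sound and close in spirit to the paper's, though packaged differently: you aggregate everything into a single walk $S_k=\sum_{i\le k}\bigl(\val^{(i)}\indic{\val^{(i)}\ge r_0}-cp^*\indic{\val^{(i)}\ge p^*}\bigr)$ and correctly observe that the bad event forces $S_\tau\ge 0$, whereas the paper first normalizes ($p^*=1$, optimal revenue $1$, no mass below $1$; \cref{lem:no_mass_under_one,lem:discretize}) and then decomposes the bad event price-by-price over the support points above the welfare threshold. (One small repairable point: if $\dist$ has an atom at $r_0$, you may have $\wel(r_0)>\delta\cdot\rev(p^*)$, since the set of bad prices could be the open up-set $(r_0,\infty)$; you should then use $\indic{\val>r_0}$ in defining the walk.)

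The genuine gap is that the entire quantitative core of the theorem --- the bound $\Pr[\sup_{k\ge1}S_k\ge 0]\le 2\delta/c$ --- is not proved but only conjectured via an unexecuted extremal/mass-shifting plan. Negative drift alone gives nothing here (a walk with mean $-\delta\cdot\rev(p^*)<0$ can still cross $0$ with probability arbitrarily close to $1$), and, as you yourself note, heavy-tailed positive jumps rule out Chernoff-type arguments, so some exact structure must be exploited. In the paper this is where essentially all of the work lies: \cref{lem:binary_supp} computes the crossing probability \emph{exactly} as $\prob_1\val_1/c$ for binary support by passing to a skip-free-to-the-left integer walk and solving a recursion for $g$ and $h(a)=g(1-a)-g(-a)$; \cref{lem:integer_multi_supp} extends this to general discrete supports via the telescoping decomposition $\Pr[\mathcal F_i]=\Pr[\mathcal G_i]-\Pr[\mathcal H_i]$ together with a coupling argument, yielding the bound $\wel(\val_{i^*})/c$; and \cref{lem:general_multi_supp} handles non-lattice supports by rounding values up to multiples of $c$, which is precisely where the factor $2$ enters ($\val_i'\le\val_i+1\le 2\val_i$) --- consistent with, but more specific than, your ``overshoot correction'' heuristic. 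Until you supply an actual proof of the supremum bound for your aggregated walk (e.g., a ladder-height or recursion computation, plus a rigorous reduction to finitely supported, lattice-valued increments), the proposal does not establish the theorem.
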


\cref{thm:lower_bound} below states that \Cref{lem:general_prob} is tight up to a constant of \(2\). The proof of \Cref{thm:lower_bound} is deferred to \Cref{appendix:proof_of_lowerbound}.

\begin{proposition}
\label{thm:lower_bound}
For any \(\delta > 0\) and \(c \in [\delta, 1]\), there exists an instance and a seller's \(c\)-EO strategy such that the probability of selecting a price with welfare less than \(\delta\) times the optimal revenue is exactly \(\delta/c\).
\end{proposition}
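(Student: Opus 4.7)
The plan is to exhibit an explicit distribution $F$ and $c$-EO strategy for the seller whose failure event --- welfare at the posted price strictly less than $\delta$ times the optimal revenue --- has probability exactly $\delta/c$. I propose a two-point distribution combined with an adaptive sampling strategy, augmented by an auxiliary randomization that pins the probability down at $\delta/c$.

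Concretely, I take $F$ supported on $\{L, V\}$ with $L < V$ and $\Pr[\val = V] = q$, choosing the parameters $L$, $V$, $q$ so that (i) the revenue-maximizing price of $F$ is $L$ itself, giving $\max_p \rev(p) = L$, and (ii) the welfare at the higher value $V$ satisfies $Vq < \delta L$, so that any outcome where the seller posts $V$ is strictly a failure. The seller's strategy samples up to $N = V/(cL)$ observations from $F$; she then posts $V$ if at least one $V$-sample has been observed, and posts $L$ otherwise. Finally, an independent biased coin is flipped on top of this rule so that the overall probability of ``posting $V$'' becomes exactly $\delta/c$.

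The verification breaks into three pieces. First, $c$-EO: the cap $N = V/(cL)$ ensures that whenever the seller posts $V$, the empirical revenue at $V$ is at least $V/N = cL$, which is at least $c$ times the maximum empirical revenue $\max(V/N, L) = L$; when she instead posts $L$, the price $L$ is either the empirical optimum (when no $V$ was seen) or remains within the $c$-EO admissible set. Second, the failure event exactly coincides with ``posting $V$'': we have $\wel(V) = Vq < \delta L = \delta \cdot \max_p \rev(p)$ by construction, while $\wel(L) = \E[\val] \ge L > \delta L$, so posting $L$ is never a failure. Third, the total probability of posting $V$ is $\beta \cdot (1 - (1-q)^N)$, and we tune the coin bias $\beta$ so this equals $\delta/c$.

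The main obstacle is a tension between the strict failure requirement $Vq < \delta L$ and the target probability $\delta/c$: a standard application of Bernoulli's inequality shows that the natural (un-randomized) probability $1 - (1-q)^N$ falls \emph{strictly} below $\delta/c$ in this parameter regime, so the coin bias $\beta$ cannot simply scale the probability upward to $\delta/c$. The resolution is to exploit the full flexibility of the $c$-EO condition: after observing a $V$-sample, the seller may post \emph{any} price in $[cV, V]$, not only $V$, and by mixing among these admissible prices (some of which also yield welfare strictly below $\delta L$, depending on whether $cV \gtrless L$), the realized probability of a failing post can be boosted up to exactly $\delta/c$ while retaining the strict failure inequality. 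The boundary case $c = \delta$, where the required probability is $1$, is handled separately by a degenerate instance in which every $c$-EO-admissible posted price lies below the failure threshold almost surely.
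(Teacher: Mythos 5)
There is a genuine gap, and it sits exactly where you located the ``main obstacle.'' Your bounded-horizon strategy (sample exactly $N = V/(cL)$ times, post $V$ iff a $V$-sample appeared) yields failure probability $1-(1-q)^N < Nq = Vq/(cL)$, which you correctly note is strictly below $\delta/c$; but your proposed repair does not close this deficit. Every price in $(L, V]$ has the same welfare $Vq$, and every price in $[cV, L]$ (when $cV \le L$) has welfare $\E[\val] \ge L > \delta L$, so mixing over the $c$-EO-admissible set conditional on the sampling outcome cannot enlarge the failure event: failure occurs only when some price above $L$ is admissible, i.e.\ only when the empirical revenue at $V$ has crossed the threshold $cL$, and that event's probability is fixed by the sampling/stopping rule alone. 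Randomization (the coin $\beta$, or mixing over prices) can only shrink the failure probability relative to the deterministic worst case, never boost it. The missing idea is an \emph{unbounded adaptive stopping rule}: the seller keeps sampling forever and stops the first time $\emprev_t(V) \ge cL$. The probability that this ever happens is computed exactly by the random-walk argument of \cref{lem:binary_supp} and equals $qV/(cL)$ (when $V/(cL)$ is an integer and $qV \le cL$); any finite cap $N$ strictly undershoots this. The paper's construction is the instance $\val = 2c$ with probability $\delta/(2c)$ and $\val = 1$ otherwise, with this wait-until-threshold strategy, giving failure probability exactly $\delta/c$ directly from \cref{lem:binary_supp}.

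A second, smaller point: the tension you identified between the strict inequality $Vq < \delta L$ and the target probability $\delta/c$ is real and cannot be resolved within this family --- the renewal probability is $Vq/(cL)$, so demanding welfare \emph{strictly} below $\delta\cdot\mathrm{OPT}$ forces the probability strictly below $\delta/c$. The paper takes $Vq = \delta L$ exactly (welfare at the bad price equal to $\delta$ times the optimal revenue), implicitly reading ``less than'' as ``at most''; if one insists on strictness, the statement should be read as tight in the limit. Your instinct to handle this was right, but the fix must come from accepting the boundary case (or a limiting argument), not from price mixing.
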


Before proving \Cref{lem:general_prob}, we first show how it can imply \Cref{thm:sample_approximate_revenue}.

\begin{proof}[Proof of \Cref{thm:sample_approximate_revenue} assuming \Cref{lem:general_prob}]
Fix the seller's cost \(\valS\). Define \(\dist\) as the distribution obtained by shifting \(\dist_B\) to the left by \(\valS\). Since the seller decides the price according to some \(c\)-EO pricing strategy in the old setting, it is equivalent to our new setting in which the seller observes at least one positive sample from the distribution \(\dist\), and then selects a price so that the empirical revenue is at least \(c\) fraction of the optimal empirical revenue. As the seller observes at least one positive sample, the optimal empirical revenue is positive. Therefore, by \Cref{lem:general_prob}, with a probability of at least \(1 - 2\delta/c\), the generated welfare of the selected price \(r_s\) is at least \(\delta\) fraction of the optimal revenue. This implies
\begin{align*}
	\E_{\valB \sim \distB} \left[(\valB - \valS) \cdot \indic{\valB \geq r_s}\right] \geq \left(1 - \frac{2\delta}{c}\right) \cdot \delta \cdot \max_{p\in \mathbb{R}} \E_{\valB \sim \distB} \left[(p - \valS) \cdot \indic{\valB \geq p}\right].
\end{align*}

By setting \(\delta = c / 4\), we obtain
\begin{align*}
	\E_{\valB \sim \distB} \left[(\valB - \valS) \cdot \indic{\valB \geq r_s}\right] \geq \frac{c}{8} \cdot \max_{p\in \mathbb{R}} \E_{\valB \sim \distB} \left[(p - \valS) \cdot \indic{\valB \geq p}\right].
\end{align*}

Now taking the expectation with respect to \(\valS\sim \distS\), it follows that
\begin{align*}
		\E_{\valS\sim \distS, \valB \sim \distB} \left[(\valB - \valS) \cdot \indic{\valB \geq r_s}\right] \geq \frac{c}{8} \cdot\E_{\valS\sim \distS}\left[  \max_{p\in \mathbb{R}} \E_{\valB \sim \distB} \left[(p - \valS) \cdot \indic{\valB \geq p}\right]\right].
\end{align*}	
Note that the left-hand side corresponds to the definition of \(c\text{-}\sellersample\), while the right-hand side matches the definition of \(\srev\). Therefore, we conclude that
\begin{align*}
	c\text{-}\sellersample \geq \frac{c}{8}\cdot \srev,
\end{align*}
which completes the proof.
\end{proof}

It remains to prove \Cref{lem:general_prob}. In particular, we aim to characterize the worst-case combination of the buyer's distribution and the seller's sampling strategy that maximizes the seller's probability of selecting a price that yields a welfare less than a \(\delta\)-fraction of the optimal revenue in hindsight. Given the buyer's distribution, the worst-case sampling strategy for the seller is straightforward: the seller keeps sampling until the seller can post a price such that the price can achieve an empirical revenue of at least a \(c\)-fraction of the optimal empirical revenue and generate welfare less than a \(\delta\)-fraction of the optimal revenue in hindsight. Thus, the main technical challenge lies in characterizing the worst-case distribution.

Our proof consists of three steps:
\begin{enumerate}
    \item As our first (and easy) step, we demonstrate that it is without loss of generality to consider simple forms of distributions. Firstly, it is sufficient to focus on distributions with discrete supports. Furthermore, we show that, without loss of generality, the distribution has an optimal revenue of \(1\) achieved at a price of \(1\). This implies that in the worst-case distributions, no probability mass exists below \(1\). The formal proof of these results is provided in \Cref{subsec:worst_case_shape}.
    
    \item As the second step, we consider the case where the support of the distribution consists exclusively of integer multiples of \(\frac{1}{c}\). We demonstrate that, in this scenario, the worst-case distributions have binary supports. The formal proof is presented in \Cref{subsec:integer_supp}.

    \item As the final step, we apply the results from the integer case to address the general case. The formal proof is provided in \Cref{subsec:general_supp}.
\end{enumerate}

\subsection{Structure of Worst-Case Distributions}
\label{subsec:worst_case_shape}

For any distribution \(\dist\) with a strictly positive optimal revenue, it can always be assumed, without loss of generality, that \(\dist\) achieves its optimal revenue at a price of \(1\). To see this, let \(p \in \argmax \left\{p \cdot \Pr[\val \geq p]\right\}\) be an arbitrary optimal price for the distribution. It is clear that \(p\) is positive, as its revenue is positive. By rescaling the distribution by a factor of \(1/p\), we can normalize the optimal price to \(1\).

We now prove that it is also without loss of generality to assume that the distribution has no probability mass below \(1\).

\begin{lemma}
\label{lem:no_mass_under_one}
Fix any distribution \(\dist \in \Delta(\mathbb{R})\) with an optimal price of \(1\) and assume the worst-case strategy of the seller. Let $\ell$ be the probability that the seller selects a price \(r\) resulting in welfare less than a \(\delta\)-fraction of the optimal revenue. Then, there exists another distribution \(\dist'\) with no probability mass below \(1\) and a seller's strategy such that the probability of selecting a price \(r\) with welfare less than a \(\delta\)-fraction of the optimal revenue is at least \(\ell\).
\end{lemma}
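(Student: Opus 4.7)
The plan is to set $\dist' := \dist \mid \val \geq 1$ (the conditional distribution), which manifestly has no mass below $1$, and to exhibit a seller strategy on $\dist'$ whose failure probability is at least $\ell$. First I would reduce $\dist$-failures to prices $r>1$: since $\wel(\cdot)$ is non-increasing and $\wel(1)\geq \rev(1)=1$, any $r\leq 1$ yields welfare at least $1>\delta$, which cannot be a failure. So the $\ell$-probability failure event on $\dist$ uses only prices $r>1$, and I only need the $\dist'$-strategy to mimic failures at such prices.

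The $\dist'$-strategy is defined by a coupling that simulates $\dist$-samples. Let $p^* := \Pr_{\val\sim\dist}[\val\geq 1] > 0$. The seller on $\dist'$ builds an iid $\dist$-stream $V_1, V_2, \ldots$ on the fly: at each step, with probability $p^*$ she consumes the next real sample $v'_{j+1}$ and sets $V_{k+1} = v'_{j+1}$; with probability $1-p^*$ she instead draws $V_{k+1}$ independently from the known conditional $\dist \mid \val < 1$. She runs the worst-case $\dist$-strategy on the $V$-stream; when it halts at step $k$ choosing a price $r$, she posts the same $r$ if $r>1$, and otherwise posts any valid $c$-EO price on her real samples (harmless, since no failure is needed in that branch).

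The main obstacle is verifying that for $r>1$ the chosen price satisfies $c$-EO on the real samples $v'_1,\ldots,v'_j$. I would rely on two coupling facts: (i)~for $p>1$, $\{i\in[k]:V_i\geq p\}$ coincides with $\{i\in[j]:v'_i\geq p\}$, because $V_i\geq p>1$ forces $V_i$ to be a ``large'' draw and hence some $v'$-sample; so the empirical revenue of $p$ on the $v'$'s equals $(k/j)$ times its empirical revenue on the $V$'s; and (ii)~the empirical revenue of $p=1$ on the $V$'s equals $j/k$, so $M_V\geq j/k$, where $M_V$ denotes the max empirical revenue on the $V$'s. Since on the $v'$'s the empirical revenue at any $p\leq 1$ is simply $p\leq 1$, the max empirical revenue on the $v'$'s equals $\max\bigl(1,\,(k/j)\max_{p>1}\mathrm{empRev}_V(p)\bigr)$. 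Splitting into the two cases of which term attains the max, and combining with the $\dist$-strategy's guarantee $\mathrm{empRev}_V(r)\geq c M_V$ together with $M_V\geq j/k$, yields $\mathrm{empRev}_{v'}(r)\geq c\cdot M_{v'}$ in both cases.

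Finally, the failure criterion transfers exactly. Direct computation gives $\max_p p\cdot\Pr_{\val\sim\dist'}[\val\geq p] = 1/p^*$ (attained at $p=1$; any $p>1$ yields $\rev(p)/p^*\leq 1/p^*$), while for $r>1$ one has $\E_{\val\sim\dist'}[\val\cdot\indic{\val\geq r}]=\wel(r)/p^*$. Hence the ratio of welfare to optimal revenue at $r$ is identical under $\dist$ and $\dist'$, so $\wel(r)<\delta$ on $\dist$ translates into the analogous failure on $\dist'$ at the same $r$. Every $\dist$-failure in the coupling thus produces a $\dist'$-failure at the same $r$, giving the $\dist'$-strategy failure probability at least $\ell$, as required.
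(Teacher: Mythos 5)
Your proposal is correct and follows essentially the same route as the paper's proof: take $\dist'$ to be $\dist$ conditioned on $\val\geq 1$, use monotonicity of $\wel$ to rule out failures at prices $r\leq 1$, couple the two sample streams, and transfer the $c$-EO guarantee via the fact that empirical revenues at prices above $1$ differ only by the factor $k/j$ (your coupling just runs in the reverse direction, simulating the $\dist$-stream from the $\dist'$-stream, which is distributionally the same). The only nitpick is the stray claim $\rev(1)=1$, which does not follow from the lemma's hypothesis that the optimal \emph{price} is $1$; but since your final argument compares the welfare-to-optimal-revenue \emph{ratio}, which is scale-invariant, this does not affect correctness.
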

\begin{proof}
    We first notice that the welfare function is non-increasing over \(\mathbb{R}_{\geq 0}\). Specifically, for any \(x \geq y \geq 0\), it holds that \(\wel(x) \geq \wel(y)\). Intuitively, this implies that under the worst-case strategy, the seller will never choose a price below \(1\). Therefore, we can remove the probability mass below \(1\) and this only increases the probability that a bad price is chosen. The formal proof is provided below.

    Define \(\dist_{\geq 1}\) as the conditional distribution of \(\dist\) given that the value is at least \(1\). Formally,
    \begin{equation*}
        \Pr_{\val \sim \dist_{\geq 1}}\InBrackets{ \val\geq x} = \left\{
        \begin{aligned}
            &1  & x \leq 1,\\
            &\frac{\Pr_{\val \sim \dist}[\val \geq x]}{\Pr_{\val \sim \dist}[\val \geq 1]}  & x \geq 1.
        \end{aligned}
        \right.
    \end{equation*}

    Now suppose that the seller employs the worst-case strategy on both distributions \(\dist\) and \(\dist_{\geq 1}\), and we couple these two random processes as follows. The seller draws a sample \(x\) from \(\dist\) and updates her empirical distribution for \(\dist\). If \(x \geq 1\), the same sample \(x\) is also used to update the empirical distribution of \(\dist_{\geq 1}\). Since the seller is following the worst-case strategy, she will stop the sampling process and select a price \(r\) if and only if a price \(r\) satisfies \(\wel(r) < \delta \cdot \rev(1)\) and has empirical revenue exceeding a \(c\)-fraction of the optimal revenue.

    Since the welfare function is monotone, for any \(r \leq 1\), it follows that
    \begin{align*}
        \wel\InParentheses{r} \geq \wel\InParentheses{1} \geq \rev\InParentheses{1}.
    \end{align*}
This implies that during the sampling process of the distribution \(\dist\), the seller will only stop at a price \(r > 1\) such that \(\wel(r) < \delta \rev(1)\). Let \(\emprev(x)\) and \(\emprev_{\geq 1}(x)\) denote the empirical revenue at price \(x\) based on the sampling processes of \(\dist\) and \(\dist_{\geq 1}\), respectively, at the time the seller stops sampling and selects the price \(r\) for \(\dist\).
 It is known that \(r\) has an empirical revenue of at least a \(c\)-fraction of the optimal empirical revenue, that is,
\[
\emprev\InParentheses{r} \geq c \cdot \max_{p} \, \emprev\InParentheses{p}.
\]
Similarly, let \(\tilde{\dist}(x)\) and \(\tilde{\dist}_{\geq 1}(x)\) denote the complementary cumulative distribution functions (denoting $\Pr[X \geq x]$ for a distribution $X$) of the empirical distribution at the time the seller stops. Then, the empirical revenue functions are given by:
\[
\emprev(x) = x \tilde{\dist}(x), \quad \emprev_{\geq 1}(x) = x \tilde{\dist}_{\geq 1}(x).
\]

Given our sampling process, it is clear that the samples with value at least \(1\) are exactly the same for \(\dist\) and \(\dist_{\geq 1}\). Therefore, the following holds for any \(a, b \geq 1\).
\[
\frac{\Tilde{F}(a)}{\Tilde{F}(b)} = \frac{\Tilde{F}_{\geq 1}(a)}{\Tilde{F}_{\geq 1}(b)}.
\]
Combining with the inequality above and the definition of the empirical revenue function, it follows that
\[
\emprev_{\geq 1}\InParentheses{r} \geq c \cdot \max_{p \geq 1} \, \emprev_{\geq 1}\InParentheses{p}.
\]
Note that in the distribution \(\dist_{\geq 1}\), the welfare induced by \(r\) is also less than a \(\delta\)-fraction of the optimal revenue. Thus, the seller will also terminate and output price \(r\) in this case. This completes our proof.
\end{proof}

Until now, we have shown that the distribution has no probability mass below \(1\) and achieves its maximum revenue at a price of \(1\). Since the optimal revenue is normalized to \(1\) and there is no probability mass below \(1\), the empirical optimal revenue is always at least \(1\). In the remainder of this proof, we establish a stronger result: we now assume that the seller can select a price with welfare below \(\delta\) whenever its empirical revenue is at least \(c\), rather than when it is at least a \(c\)-fraction of the optimal empirical revenue, which is always weakly higher. This assumption strengthens the worst-case strategy by making it easier for the seller to choose an undesired price. Under this stronger assumption, we demonstrate that it is sufficient to restrict the analysis to discrete distributions.

\begin{lemma}
\label{lem:discretize}
Consider any (potentially non-discrete) distribution \(\dist\) with no probability mass below \(1\) and an optimal revenue of \(1\) achieved at a price of \(1\). For any \(\delta > 0\), assume that, with probability \(\ell\), there exists a price with welfare not exceeding \(\delta\) such that its empirical revenue is at least \(c\) after a certain number of samples. For any \(\varepsilon > 0\), there exists a corresponding discrete distribution \(\dist'\) such that a price also exists with welfare at most \((\delta + \varepsilon)\) and empirical revenue of at least \(c\), after sampling a certain number of times, with probability at least \(\ell\).
\end{lemma}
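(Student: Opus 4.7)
The plan is to construct $F'$ by rounding up the values of $F$ to a fine grid while preserving the atoms of $F$ exactly. Write $F = F_d + F_c$ where $F_d = \sum_i p_i \delta_{v_i}$ collects the atoms and $F_c$ is atomless; fix a small grid spacing $\eta > 0$ (to be chosen from $F$ and $\varepsilon$), and let $F'$ be the pushforward of $F$ under the map $\pi$ that fixes each atom ($\pi(v_i) = v_i$) and sends a point $y$ in the support of $F_c$ to $\eta \lceil y / \eta \rceil$. Then $F'$ is a discrete distribution, supported in $\{v_i\} \cup \{k\eta : k \in \mathbb{Z}_{\geq 1}\}$. I would couple the two sampling processes by $x_i' := \pi(x_i) \sim F'$ for $x_i \sim F$, which gives $x_i \leq x_i' \leq x_i + \eta$. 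On any realization in the bad event for $F$---where at some stopping time $\tau$ the strategy identifies a price $r$ with $\wel_F(r) \leq \delta$ and $F$-empirical revenue at $r$ at least $c$---I would use the same stopping time $\tau$ and the same candidate price $r$ in $F'$.

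The empirical-revenue condition transfers immediately: since $x_i' \geq x_i$, the indicator $\indic{x_i' \geq r}$ is at least $\indic{x_i \geq r}$, so the $F'$-empirical revenue at $r$ is at least the $F$-empirical revenue, hence at least $c$. For the welfare condition, atom preservation cancels all contributions of $F_d$ exactly, leaving only the continuous-part error:
\[
\wel_{F'}(r) - \wel_F(r) \;\leq\; \eta + r \, F_c\bigl([r - \eta,\, r)\bigr).
\]
The remaining task is to bound this right-hand side by $\varepsilon$ uniformly over the random $r$.

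This uniform bound is the main obstacle, particularly for large $r$. I would split into two regimes. For large $r$: the assumption $\ell > 0$ implies that $\wel_F$ is finite somewhere, and monotonicity of welfare then forces $\wel_F(p) \to 0$ as $p \to \infty$, which in turn gives $\rev_F(p) = p \Pr_F[x \geq p] \to 0$. Hence $r F_c([r - \eta, r)) \leq r \Pr_F[x \geq r - \eta] = \rev_F(r - \eta) + \eta \Pr_F[x \geq r - \eta]$ can be made at most $\varepsilon/2$ for all $r \geq R$ once $R$ is sufficiently large and $\eta$ is bounded. For $r \in [1, R]$: the atomless CDF of $F_c$ is uniformly continuous on the compact interval $[1, R]$, so $\sup_{r \in [1, R]} F_c([r - \eta, r)) \to 0$ as $\eta \to 0$, making that contribution at most $\varepsilon/2$ for sufficiently small $\eta$. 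Choosing such an $\eta$ (also at most $\varepsilon/2$) yields $\wel_{F'}(r) \leq \delta + \varepsilon$ in both regimes, completing the argument. Preserving the atoms of $F$ is essential here: without it, a single atom of $F$ of mass on the order of $1/r$ landing in $[r - \eta, r)$ could inflate $\wel_{F'}(r)$ by an uncontrolled constant, far exceeding $\varepsilon$.
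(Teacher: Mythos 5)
Your argument is correct, but it takes a genuinely different route from the paper's. The paper first locates the threshold $\tau = \inf\{x \mid \wel(x) \le \delta\}$, collapses all mass at or below $\tau$ to the value $1$, and rounds the mass above $\tau$ up to the grid $\{\tau + k\varepsilon\}$; the coupled bad price is the \emph{rounded-up} price, which necessarily lands at $\ge \tau+\varepsilon$, so by monotonicity of the welfare function one only has to check the single bound $\wel'(\tau+\varepsilon)\le \delta+\varepsilon$ --- no uniform control of $\wel_{F'}-\wel_F$ over all prices is needed. You instead preserve the atoms, round only the atomless part to an $\eta$-grid, reuse the \emph{same} price $r$, and then prove a uniform perturbation bound $\wel_{F'}(r)\le\wel_F(r)+\varepsilon$ via a two-regime argument (compactness plus uniform continuity of the atomless CDF on $[1,R]$, and decay of $\rev_F(p)\le\wel_F(p)\to 0$ for $r\ge R$). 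Both arguments are sound; yours is more robust in that it perturbs the distribution less and makes the welfare comparison pointwise at the original price, while the paper's threshold trick avoids all the analytic bookkeeping. One caveat worth noting: because you keep every atom of $F$, your $F'$ has countable support but that support need not be enumerable in increasing order as $\val_0<\val_1<\cdots$ (the atoms of $F$ could accumulate), whereas the paper's construction yields support $\{1\}\cup\{\tau+k\varepsilon\}_{k\ge1}$, which is exactly the well-ordered form assumed in \Cref{subsec:integer_supp,subsec:general_supp}. Your construction proves the lemma as literally stated (``discrete distribution''), but if it were substituted into the paper you would want to add a further (easy) reduction, e.g.\ also rounding the atoms to the grid, to recover that structure.
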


The proof of \Cref{lem:discretize} is deferred to \Cref{appendix:proof_of_discretize}. Leaving \(\varepsilon \to 0\), it follows that we can restrict our consideration to discrete distributions without loss of generality.

\subsection{Integer Support}
\label{subsec:integer_supp}

As described in \Cref{subsec:worst_case_shape}, we assume that the distribution \(\dist\) is structured as follows: it assigns a value \(\val_i\) with probability \(\prob_i\) for \(i = 0, 1, \ldots, k\), where \(k\) may potentially be infinite. We further assume that \(\val_i < \val_{i+1}\) and \(\val_0 = 1\). Additionally, the optimal revenue is achieved at \(1\). Formally, for any \(i \in \{0, 1, \ldots, k\}\), the following inequality holds:
\[
\val_i \cdot \left(\sum_{j = i}^{k} \prob_j\right) \leq 1.
\]

In this section, we focus on the case where all support values are integer multiples of \(1/c\). Specifically, we assume that each \(\val_i/c\) is integer for all \(i \in \{1, 2, 3, \ldots, k\}\).

We begin by analyzing the case of binary support. The following lemma provides a complete characterization of this scenario.

\begin{lemma}
\label{lem:binary_supp}
Consider a binary support distribution where \(k = 1\) and \(\frac{\val_1}{c}\) is an integer. Assume that \(\frac{\prob_1\val_1}{c} \leq 1\). Let \(\emprev_{t}\InParentheses{x}\) denote the empirical revenue of price \(x\) after \(t\) samples. The probability that there exists some \(t \in \mathbb{N}_{+}\) such that \(\emprev_{t}\InParentheses{\val_1} \geq c\) is precisely \( \frac{\prob_1\val_1}{c}\).
\end{lemma}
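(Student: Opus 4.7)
The plan is to reformulate the event as a random-walk hitting event and then compute the probability via the cycle lemma of Dvoretzky--Motzkin. Write $m := \val_1 / c$, which is a positive integer by assumption, and let $U_t := \bigl|\{i \leq t : \val^{(i)} = \val_1\}\bigr|$ count the $\val_1$-samples among the first $t$ drawn. Then $\emprev_t(\val_1) = \val_1 \cdot U_t / t$, so the event $\emprev_t(\val_1) \geq c$ is equivalent to $m U_t \geq t$, i.e.\ to $S_t \geq 0$, where $S_t := m U_t - t$. Here $(S_t)_{t \geq 0}$ is a lattice random walk with $S_0 = 0$ whose i.i.d.\ increments $Y_i$ equal $m-1$ with probability $\prob_1$ and $-1$ with probability $1 - \prob_1$. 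The lemma is thus equivalent to
\[
\Pr\bigl[\exists t \geq 1 :\; S_t \geq 0\bigr] \;=\; \prob_1 \, m.
\]

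For each finite horizon $n$, I will compute the complementary probability $\Pr[S_t \leq -1 \text{ for all } t \in [1,n]]$ using the cycle lemma in the following form: for any integer sequence $a_1, \ldots, a_n$ with each $a_i \leq 1$ and $\sum_i a_i = K > 0$, exactly $K$ of the $n$ cyclic rotations have all partial sums strictly positive. I will apply this to the negated walk $(-Y_1, \ldots, -Y_n)$, whose entries lie in $\{1, -(m-1)\}$ and whose sum equals $-S_n = n - m U_n$. Conditioned on $U_n = j$ with $m j < n$, the lemma yields that exactly $n - m j$ of the $n$ cyclic rotations make all partial sums of $-Y$ strictly positive, equivalently make $S_t \leq -1$ throughout $[1,n]$. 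A standard double-counting over (arrangement, shift) pairs, together with the fact that every arrangement with $j$ up-steps is equally likely, gives
\[
\Pr\bigl[S_t < 0 \text{ for all } t \in [1,n] \,\big|\, U_n = j\bigr] \;=\; \frac{(n - m j)^+}{n},
\]
and de-conditioning yields $\Pr[S_t < 0 \text{ for all } t \in [1,n]] = \E\bigl[(1 - m U_n / n)^+\bigr]$.

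Finally, I let $n \to \infty$. By the strong law of large numbers, $U_n / n \to \prob_1$ almost surely, and the hypothesis $\prob_1 m \leq 1$ ensures the bounded integrand converges almost surely to $(1 - \prob_1 m)^+ = 1 - \prob_1 m$; bounded convergence then produces the expectation limit $1 - \prob_1 m$. Since $\{S_t < 0 \text{ for all } t \in [1, n]\}$ is a nested decreasing sequence of events, continuity of measure gives $\Pr[S_t < 0 \text{ for all } t \geq 1] = 1 - \prob_1 m$, and complementation delivers the target value $\prob_1 m = \prob_1 \val_1 / c$. The hardest step is finding the right reformulation in the second paragraph: a direct Markov-chain recurrence for the level-hitting probabilities produces an order-$m$ linear recurrence whose roots do not obviously combine to the desired value, whereas negating the walk puts the steps into the ``bounded above by $1$'' regime of the cycle lemma, reducing the problem to a clean combinatorial count that combines immediately with the law of large numbers.
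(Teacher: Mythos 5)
Your proof is correct, and while it begins with the same reformulation as the paper (the skip-free lattice walk $S_t = \frac{\val_1}{c}U_t - t$ with the event $\emprev_t(\val_1)\geq c$ equivalent to $S_t\geq 0$), the core computation of the hitting probability proceeds by a genuinely different route. The paper solves the problem via a boundary-value recursion: it introduces $g(a)$, the probability of ever reaching a positive state from $a$, forms the difference function $h(a)=g(1-a)-g(-a)$, telescopes, and sums the resulting identity over all $a$ to extract $g(0)$ and then $f(0)=\prob_1+(1-\prob_1)g(0)=\prob_1\val_1/c$. You instead compute the exact finite-horizon probability $\Pr[S_t<0 \text{ for all } t\in[1,n]\mid U_n=j]=(n-mj)^+/n$ via the Dvoretzky--Motzkin cycle lemma applied to the negated walk (whose steps are bounded above by $1$, as the lemma requires), de-condition to get $\E[(1-mU_n/n)^+]$, and pass to the limit using the strong law of large numbers, bounded convergence, and continuity of measure for the nested events; the hypothesis $\prob_1 m\leq 1$ enters exactly where it should, in evaluating $(1-\prob_1 m)^+=1-\prob_1 m$. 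Each step checks out: the exchangeability/double-counting argument over (arrangement, rotation) pairs is the standard companion to the cycle lemma, and the case $mj\geq n$ is correctly absorbed by the positive part. Your approach buys an explicit finite-$n$ formula and sidesteps the convergence bookkeeping implicit in the paper's manipulation of the infinite sum $\sum_a h(a)$ and the limit $\lim_{a\to-\infty}g(a)=0$; the paper's approach is more self-contained in that it requires no external combinatorial lemma. Both are valid proofs of the same exact identity.
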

\begin{proof}

Consider the following random walk over \(\mathbb{Z}\). Let \(X_0 = a \in \mathbb{R}\), and define \(X_i\) as follows:
\[
X_i = 
\begin{cases} 
    X_{i - 1} - 1 & \text{with probability } \prob_0, \\
    X_{i - 1} + \frac{\val_1}{c} - 1 & \text{with probability } \prob_1.
\end{cases}
\]

We first argue that the probability of there existing some \(t \in \mathbb{N}_{+}\) such that the empirical revenue of \(\val_1\) after \(t\) samples is at least \(c\) is identical to the probability that there exists some \(t > 0\) for which \(X_t \geq 0\), given the initialization \(X_0 = 0\). This equivalence is established by coupling the sampling process with the random walk defined above.  

For any \(i \geq 1\), if \(X_i = X_{i - 1} - 1\), we assume that the corresponding sample is \(x_i = 1\); otherwise, it is \(x_i = \val_1\). Clearly, the sequence \(\{x_i\}\) corresponds to a stream of i.i.d.\@ samples from the binary distribution. 

Now, suppose \(X_t \geq 0\) for some \(t > 0\). Let \(\ell\) denote the number of steps with \(X_i = X_{i-1} + \frac{\val_1}{c} -1\) in the first \(t\) steps of the random walk. Notice that \(\ell\) also represents the number of occurrences of \(\val_1\) in the first \(t\) samples \(x_1, x_2, \dots, x_t\). We observe that  
\[
X_t \geq 0 \Leftrightarrow \ell \cdot \frac{\val_1}{c} \geq t \Leftrightarrow \val_1 \cdot \frac{\ell}{t} \geq c \Leftrightarrow \emprev_t(\val_1) \geq c.
\]
Thus, we establish that \(X_t \geq 0\) if and only if \(\emprev_t(\val_1) \geq c\). In the following, we aim to characterize the probability that \(X_t \geq 0\) for some \(t > 0\).

Let \(f(a)\) denote the probability that, starting from \(X_0 = a\), there exists some \(t > 0\) such that \(X_t \geq 0\). Similarly, let \(g(a)\) represent the probability that, starting from \(X_0 = a\), there exists some \(t > 0\) such that \(X_t > 0\). 

Given the recursion of the random walk, the following fact about $g(a)$ holds.
\begin{align*}
    g(a) = 
    \begin{cases}
        1 & a \geq 1,\\
        \prob_1 \cdot g\big(a + \frac{\val_1}{c} - 1\big)  + (1 - \prob_1) \cdot g(a - 1) & a < 1.
    \end{cases}
\end{align*}

Now, define \(h(a) = g(1 - a) - g(-a)\). The definition of \(h(a)\) gives the following fact.
\begin{align*}
    h(a) = 
    \begin{cases}
        0 & a < 0,\\
        \frac{\prob_1}{1 - \prob_1} \cdot \displaystyle\sum_{i=1}^{\frac{\val_1}{c} - 1} h(a-i) & a > 0.
    \end{cases}
\end{align*}
To see why \(h(a)\) takes this form when \(a > 0\), notice that 
\begin{align*}
h(a) &= g(1 - a) - g(-a) \\
    & = \left(p_1\cdot g\InParentheses{\frac{\val_1}{c}-a} + (1 - p_1)\cdot g( - a)\right) - g(-a)\\
    & = p_1 \left(g\InParentheses{\frac{\val_1}{c}-a} - g(-a)\right)\\
    & = p_1 \sum_{i = 0}^{\frac{\val_1}{c} - 1} g\InParentheses{i - a + 1} - g(i-a)\\
    & = p_1 \sum_{i = 0}^{\frac{\val_1}{c} - 1} h(a - i).
\end{align*}

By moving \(h(a)\) from the right hand side to the left hand side, it follows that 
\[
h(a) = \frac{p_1}{1 - p_1} \sum_{i = 1}^{\frac{\val_1}{c} - 1} h(a - i).
\]

It is straightforward that \(\lim_{a\rightarrow -\infty} g(a) = 0\). Therefore, This means that 
\begin{equation}
\label{eq:solve_ga_1}    
\sum_{a=0}^{\infty} h(a) = g(1) - \lim_{a\rightarrow-\infty} g(a) = 1.
\end{equation}

From the recursion of \(h(a)\), we derive
\begin{align}\label{eq:solve_ga_2}
\begin{split}
    \sum_{a=0}^{\infty} h(a) & = h(0) + \sum_{a = 1}^{\infty} h(a)\\
        & = h(0) + \frac{\prob_1}{1 - \prob_1} \cdot \InParentheses{\frac{\val_1}{c} - 1} \cdot \sum_{a = 0}^{\infty} h(a).
\end{split}
\end{align}

Combining \cref{eq:solve_ga_1,eq:solve_ga_2}, it follows that 
\[
h(0) = 1 - \frac{\prob_1}{1 - \prob_1} \cdot \InParentheses{\frac{\val_1}{c}-1}.
\] 
Therefore,
\[
g(0) = g(1) - h(0) = \frac{\prob_1}{1 - \prob_1} \cdot \left(\frac{\val_1}{c} - 1\right).
\]

Finally, observe that \(f(0) = \prob_1 + (1 - \prob_1) \cdot g(0)\). This follows because, with probability \(\prob_1\), we directly have \(X_1 > 0\); otherwise, with probability \(1 - \prob_1\), the random walk starts at \(-1\) and continues until it reaches a non-negative point, which is precisely described by \(g(0)\). Substituting, we have
\[
f(0) = \prob_1 + (1 - \prob_1) \cdot g(0) = \frac{\prob_1 \val_1}{c}.
\]
This completes the proof.
\end{proof}

In the following, we show that binary support is the worst-case distribution. The following lemma states that the probability that a price with welfare less than \(\delta\) is chosen is at most \(\delta/c\).
\begin{lemma}
\label{lem:integer_multi_supp}
Consider an arbitrary distribution where \(\frac{\val_i}{c}\) is an integer for all \(i\in \{1, 2, \ldots, k\}\). Let \(\emprev_{t}\InParentheses{x}\) denote the empirical revenue of price \(x\) after \(t\) samples. For any \(i\in \{1, 2, \ldots, k\}\), the probability that there exists some \(t \in \mathbb{N}_{+}\) and \(j \geq i\) such that \(\emprev_{t}\InParentheses{\val_j} \geq c\) is at most \(\wel\InParentheses{\val_i} / c\).
\end{lemma}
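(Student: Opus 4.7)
The plan is to reduce the multi-value setting to a single random walk whose non-negativity dominates every price's success condition, then analyze that walk by generalizing the $h$-function trick from \Cref{lem:binary_supp}. For each sample $x_s$, let $\ell(x_s)$ be the index with $x_s = \val_{\ell(x_s)}$, and assign a credit $g(x_s) = \val_{\ell(x_s)}/c$ when $\ell(x_s) \geq i$, and $g(x_s) = 0$ otherwise. Define the walk $S_t = \sum_{s=1}^{t} (g(x_s) - 1)$, whose i.i.d.\@ integer steps lie in $\{-1\} \cup \{\val_\ell/c - 1 : \ell \geq i\}$.

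The first step is a clean event-containment observation: whenever $\emprev_t(\val_j) \geq c$ for some $j \geq i$, we must have $S_t \geq 0$. This is because $\emprev_t(\val_j) \geq c$ is equivalent to $(\val_j/c) \cdot |\{s \leq t : x_s \geq \val_j\}| \geq t$, and each sample $x_s \geq \val_j$ contributes credit $\val_{\ell(x_s)}/c \geq \val_j/c$ to $S_t$. Hence $\sum_{s=1}^{t} g(x_s) \geq (\val_j/c) \cdot |\{s \leq t : x_s \geq \val_j\}| \geq t$, giving $S_t \geq 0$. Consequently, the probability we wish to bound is at most $\Pr[\exists t > 0 : S_t \geq 0 \mid S_0 = 0]$.

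The second step, which I expect to be the main obstacle, computes this random-walk probability by generalizing the recurrence-based analysis of \Cref{lem:binary_supp}. Define $F(a) = \Pr[\exists t \geq 0 : S_t \geq 0 \mid S_0 = a]$, so $F(a) = 1$ for $a \geq 0$ and $F(a) = P_0 F(a-1) + \sum_{\ell \geq i} p_\ell F(a + \val_\ell/c - 1)$ for $a < 0$, where $P_0 = \sum_{\ell < i} p_\ell$. Setting $H(a) = F(-a) - F(-a-1)$, one checks that $H(a) = 0$ for $a \leq -1$, and applying the telescoping step of the binary proof to each $\ell \geq i$ separately yields
\[
P_0 \cdot H(a) = \sum_{\ell \geq i} p_\ell \sum_{j=1}^{\val_\ell/c - 1} H(a-j) \quad \text{for } a \geq 1.
\]
When the drift $\wel(\val_i)/c - 1$ is negative, the walk is transient to $-\infty$, so $\sum_{a \geq 0} H(a) = F(0) - \lim_{b \to -\infty} F(b) = 1$; combining this identity with the recurrence pins down $H(0)$ and the total probability, giving $\Pr[\exists t > 0 : S_t \geq 0] = \wel(\val_i)/c$ exactly. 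When the drift is non-negative, the probability is trivially at most $1 \leq \wel(\val_i)/c$. Either way, the lemma follows. The main subtlety is extending the binary telescoping identity to an arbitrary collection of positive step sizes simultaneously; once that is in place, the bookkeeping is essentially identical to the binary case.
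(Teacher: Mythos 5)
Your proposal is correct, but it takes a genuinely different route from the paper. The paper keeps the union bound at the level of events: it defines $\mathcal{F}_j$ as ``price $\val_j$ reaches empirical revenue $c$ at some time but $\val_{j+1}$ never does,'' writes $\Pr[\mathcal{F}_j] = \Pr[\mathcal{G}_j] - \Pr[\mathcal{H}_j]$, evaluates $\Pr[\mathcal{G}_j]$ exactly via the binary-support walk of \cref{lem:binary_supp}, lower-bounds $\Pr[\mathcal{H}_j]$ by a coupling that collapses all values $\geq \val_{j+1}$ down to $\val_j$, and then telescopes $\sum_{j\geq i}\Pr[\mathcal{F}_j] \leq \sum_{j \geq i}\prob_j\val_j/c = \wel(\val_i)/c$. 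You instead build a single dominating walk $S_t$ with credits $g(x_s) = \val_{\ell(x_s)}/c$ for samples at or above $\val_i$, observe the containment $\{\exists t, j\geq i: \emprev_t(\val_j)\geq c\} \subseteq \{\exists t>0: S_t\geq 0\}$ (your verification of this is right: every sample counted toward $\emprev_t(\val_j)$ contributes credit at least $\val_j/c$), and then redo the $h$-function computation for a skip-free-to-the-left walk with several up-step sizes, obtaining the hitting probability $\sum_{\ell\geq i}\prob_\ell\val_\ell/c = \wel(\val_i)/c$ exactly. I checked your recurrence $P_0 H(a) = \sum_{\ell\geq i}p_\ell\sum_{j=1}^{\val_\ell/c-1}H(a-j)$ and the resulting value of $H(0)$; they are correct, and the first-step decomposition recovers the stated probability. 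Your route is arguably cleaner --- one walk replaces the per-index inclusion--exclusion plus coupling, and it yields the bound as an exact hitting probability rather than a telescoping sum of upper bounds --- at the cost of having to generalize the binary-walk calculation rather than reusing it as a black box. In a full write-up you should still record the routine facts you are implicitly using: $P_0>0$ whenever the drift is negative (otherwise the bound is trivial), $\lim_{b\to-\infty}F(b)=0$ by transience, Tonelli to justify the interchange of the nonnegative sums, and the observation that the quantity you want is $\sum_{\ell\geq i}p_\ell + P_0 F(-1)$ rather than $F(0)=1$, since your $F$ includes $t=0$.
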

\begin{proof}
    We begin by defining the event \(\mathcal{F}_i\): there exists some \(t \in \mathbb{N}_{+}\) such that the empirical revenue of \(\val_i\) after \(t\) samples is at least \(c\), while for all \(t \in \mathbb{N}_{+}\), the empirical revenue of \(\val_{i+1}\) remains below \(c\). Specifically, for \(i = k\), we define \(\val_{k+1}\) and \(\prob_{k+1}\) to be \(0\). We aim to show that 
    \[
    \Pr[\mathcal{F}_i] \leq \frac{\prob_i \val_i}{c}.
    \]

To establish this, we define two events, \(\mathcal{G}_i\) and \(\mathcal{H}_i\). The event \(\mathcal{G}_i\) occurs if there exists some \(t \in \mathbb{N}_{+}\) such that the empirical revenue of \(\val_i\) after \(t\) samples is at least \(c\). Similarly, the event \(\mathcal{H}_i\) occurs if there exists some \(t \in \mathbb{N}_{+}\) such that the empirical revenue of \(\val_i\) after \(t\) samples is at least \(c\), while for some (not necessarily the same) \(t' \in \mathbb{N}_{+}\), the empirical revenue of \(\val_{i+1}\) after \(t'\) samples is also at least \(c\).

From \Cref{lem:binary_supp}, it directly follows that 
\[
\Pr\InBrackets{\mathcal{G}_i} = \frac{\val_i \cdot \sum_{j=i}^k \prob_j}{c},
\]
because if we focus solely on the empirical revenue of \(\val_i\), it is equivalent to a binary support case where the value \(\val_i\) has a probability of \(\sum_{j=i}^k \prob_j\), and the value is \(1\) otherwise.

We now provide a lower bound for \(\mathcal{H}_i\). To demonstrate this, we consider the following coupled sampling process. Let \(x_t\) denote the \(t\)-th sample drawn from the current distribution. Additionally, define \(x_t'\) as follows: \(x_t' = \val_i\) if \(x_t \geq \val_{i+1}\), and \(x_t' = 1\) otherwise. Similarly, we define \(\emprev_t'(r)\) as the empirical revenue at price \(r\) based on the first \(t\) samples \(\left\{x_1', x_2', \dots, x_t'\right\}\). Formally,  
\[
\emprev_t'(r) = r \cdot \frac{\left|\left\{i\in [t] \mid x_i' \geq r\right\}\right|}{t}.
\]

Now assume that there exists some \(t\in \mathbb{N}_{+}\) such that 
\[
\emprev_t'(\val_i) \geq c.
\]
By the definition of the sampling process, we have
\[
\left\{i\in [t]\given x_i \geq \val_{i+1}\right\} = \left\{i\in [t]\given x_i' \geq \val_{i}\right\}
\]
Therefore, it follows that
\begin{align*}
\emprev_t\InParentheses{\val_i} &= \val_i \cdot \frac{\left|\left\{i\in [t]\given x_i \geq \val_{i}\right\}\right|}{t} \geq  \val_i \cdot \frac{\left|\left\{i\in [t]\given x_i \geq \val_{i+1}\right\}\right|}{t} = \val_i \cdot \frac{\left|\left\{i\in [t]\given x_i' \geq \val_{i}\right\}\right|}{t} \\
& =  \emprev_t'\InParentheses{\val_i} \geq c,
\end{align*}
and
\begin{align*}
   \emprev_t\InParentheses{\val_{i + 1}} &=  \val_{i + 1} \cdot \frac{\left|\left\{i\in [t]\given x_i \geq \val_{i+1}\right\}\right|}{t} \geq \val_{i} \cdot \frac{\left|\left\{i\in [t]\given x_i \geq \val_{i+1}\right\}\right|}{t} = \val_i \cdot \frac{\left|\left\{i\in [t]\given x_i' \geq \val_{i}\right\}\right|}{t}\\
   & = \emprev_t'\InParentheses{\val_i} \geq c.
\end{align*}
This implies that \(\mathcal{H}_i\) occurs whenever \(\val_i\) achieves an empirical revenue of at least \(c\) with respect to the sampling process \(\{x_i'\}_{i=1}^{\infty}\). Furthermore, observe that \(x_i'\) is a sample from a binary distribution, where the value is \(\val_i\) with probability \(\sum_{j=i+1}^{k} \prob_j\), and \(1\) otherwise. By Lemma~\ref{lem:binary_supp}, it holds that
\[
\Pr[\mathcal{H}_i] \geq \frac{\val_i \cdot \sum_{j=i+1}^{k} \prob_j}{c}.
\]

Finally, it is clear that 
\[
\Pr\InBrackets{\mathcal{F}_i} = \Pr\InBrackets{\mathcal{G}_i} - \Pr\InBrackets{\mathcal{H}_i} \leq \frac{\val_i \prob_i}{c}.
\]

For any \(i \in \{1, 2, \dots, k\}\), the probability that there exists some \(t \in \mathbb{N}_{+}\) and \(j \geq i\) such that the empirical revenue of \(\val_j\) after sampling \(t\) times is at least \(c\) is clearly upper bounded by
\[
\sum_{j=i}^{k} \Pr\left[\mathcal{F}_j\right] \leq \sum_{j=i}^{k} \frac{\prob_j \val_j}{c} = \frac{\wel(\val_i)}{c}.
\]
This completes the proof.
\end{proof}

\subsection{General Support}
\label{subsec:general_supp}

In this section, we complete the proof of \Cref{lem:general_prob} for general supports. As in \Cref{subsec:integer_supp}, we consider a distribution where each value \(\val_i\) occurs with probability \(\prob_i\) for \(i \in \{0, 1, 2, \dots, k\}\), where \(k\) may be infinite.
 Additionally, we assume that \(\val_i < \val_{i+1}\) and that the distribution achieves an optimal revenue of \(1\) at \(\val_0 = 1\). Unlike \Cref{subsec:integer_supp}, we do not require \(\val_i/c\) to be an integer in this section.

Given any \(\delta > 0\), let \(i^{*} = \min \left\{i \mid \wel\InParentheses{\val_i} \leq \delta \right\}\). Without loss of generality, we assume that such \(i^{*}\) exists. Our goal is to show the following lemma.

\begin{lemma}
\label{lem:general_multi_supp}
Let \(\emprev_t(x)\) denote the empirical revenue of price \(x\) after \(t\) samples. For any \(i^{*} \in \{1,2,\ldots,k\}\), the probability of there existing some \(t \in \mathbb{N}_{+}\) and \(j \geq i^{*}\) such that
\[
\emprev_t(\val_j) \geq c
\]
is at most \(2 \wel\InParentheses{i^*}/c\).
\end{lemma}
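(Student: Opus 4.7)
The plan is to reduce to the integer-support case \cref{lem:integer_multi_supp} at the price of a factor of two, via two successive modifications of $\dist$. First, \emph{truncate}: let $\dist^{\text{top}}$ replace all atoms $(\val_j,\prob_j)$ with $j<i^*$ by a single atom of mass $\sum_{j<i^*}\prob_j$ at $1$, while keeping $(\val_j,\prob_j)$ for $j\geq i^*$ unchanged. Second, \emph{round up}: for each $j\geq i^*$ let $\tilde{\val}_j = c\lceil \val_j/c\rceil$ be the smallest integer multiple of $c$ at or above $\val_j$, and let $\tilde{\dist}^{\text{top}}$ be the distribution in which each $\val_j$ is replaced by $\tilde{\val}_j$ (with the atom at $1$ unchanged).

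Truncation preserves the event exactly: for every $j\geq i^*$, $\Pr_{\dist^{\text{top}}}[\val\geq \val_j]=\Pr_{\dist}[\val\geq \val_j]$, so the sample counts above each such $\val_j$ have the same joint distribution under $\dist$ and under $\dist^{\text{top}}$, and hence so does the event $\{\exists t\in\mathbb{N}_+,\,j\geq i^*:\emprev_t(\val_j)\geq c\}$. Rounding is event-monotone: coupling each sample $x\sim\dist^{\text{top}}$ with $\tilde x\sim\tilde{\dist}^{\text{top}}$ by the rounding map, one has $\tilde x\geq\tilde{\val}_j$ whenever $x\geq\val_j$ together with $\tilde{\val}_j\geq\val_j$, so $\emprev_t(\val_j)\geq c$ in $\dist^{\text{top}}$ implies $\tilde{\val}_j\cdot|\{i\leq t:\tilde x_i\geq\tilde{\val}_j\}|/t\geq c$ in $\tilde{\dist}^{\text{top}}$. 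It therefore suffices to upper bound the probability of this latter event.

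The support of $\tilde{\dist}^{\text{top}}$ above $1$ lies in $c\mathbb{Z}$; merging any coinciding rounded values yields a strictly increasing support satisfying the hypothesis of \cref{lem:integer_multi_supp}, and because $\val_{i^*}>\val_0=1$ gives $\tilde{\val}_{i^*}>1$, the atom at $1$ remains strictly separated from the values above. Applying \cref{lem:integer_multi_supp} at the smallest merged value above $1$ gives an upper bound of $\sum_{j\geq i^*}\prob_j\tilde{\val}_j/c$, since each tie cluster contributes its common rounded value times the sum of the $\prob_j$'s in that cluster. Finally, $\val_j\geq 1\geq c$ gives $\tilde{\val}_j\leq \val_j+c\leq 2\val_j$, so $\sum_{j\geq i^*}\prob_j\tilde{\val}_j\leq 2\wel(\val_{i^*})$, yielding the desired bound $2\wel(\val_{i^*})/c$.

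The main obstacle is controlling the ties that rounding creates. If $\dist$ were rounded directly without the truncation step, a value $\val_j$ with $j<i^*$ sitting in the same length-$c$ bucket as $\val_{i^*}$ would round to $\tilde{\val}_{i^*}$, and the analogous application of \cref{lem:integer_multi_supp} would pick up an extra term $\prob_j\tilde{\val}_{i^*}$ for each such $j$; concentrating $\dist$'s mass just below $\val_{i^*}$ can make these extras dwarf $2\wel(\val_{i^*})$. Truncating to $\dist^{\text{top}}$ eliminates exactly the mass that could participate in such a cross-threshold tie cluster.
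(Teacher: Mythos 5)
Your proof is correct and follows the paper's basic plan (round the values of $\dist$ above $1$ up to the nearest multiple of $c$, then invoke \cref{lem:integer_multi_supp} and pay a factor of $2$). What you add --- the preliminary truncation that collapses all atoms with index below $i^*$ onto the atom at $1$ --- is not cosmetic: it repairs a genuine omission in the paper's argument. The paper rounds the \emph{entire} distribution and then applies \cref{lem:integer_multi_supp} ``at index $i^*$'' to obtain the bound $\sum_{j\geq i^*}\prob_j\val_j'/c$. But \cref{lem:integer_multi_supp} is stated for distributions with strictly increasing support, and after rounding, a value $\val_{j}$ with $j<i^*$ may land on $\val_{i^*}'$. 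When that happens, applying \cref{lem:integer_multi_supp} at the merged atom $\val_{i^*}'$ gives the bound $\wel_{\dist'}(\val_{i^*}')/c$, which includes the mass of those tying $j<i^*$ and can strictly exceed the paper's claimed $\sum_{j\geq i^*}\prob_j\val_j'/c$. For a concrete instance: take $c=1$ and $\dist$ putting mass $1-q-\varepsilon$ at $1$, mass $q$ at $1.5$, and mass $\varepsilon$ at $2$ with $i^*=2$; rounding merges $1.5$ and $2$ into an atom of mass $q+\varepsilon$ at $2$, so the probability of the bad event under $\dist'$ is $2(q+\varepsilon)$ by \cref{lem:binary_supp}, which dwarfs the paper's intermediate bound $2\varepsilon$ when $q\gg\varepsilon$. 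Your truncation step removes exactly the cross-threshold mass that can create such a tie cluster, after which rounding can only tie values whose mass already appears in $\wel(\val_{i^*})$, and the computation goes through cleanly. The rest of your argument --- that truncation preserves the event exactly because only the tail probabilities $\Pr[\val\geq\val_j]$ for $j\geq i^*$ matter, that rounding is event-monotone under the natural coupling, and the final estimate $\tilde\val_j\leq\val_j+c\leq 2\val_j$ --- is sound. In short, you used the paper's method but supplied a step the paper should have made explicit.
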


Before proving \Cref{lem:general_multi_supp}, observe that it directly implies \Cref{lem:general_prob}. Note that the optimal empirical revenue after any number of samples is always at least \(1\), achieved at \(\val_0 = 1\), since all samples have a value of at least \(1\). Applying \Cref{lem:general_multi_supp} to \(i^{*}\), it follows that, with probability at most \(2\wel(\val_i)/c \leq 2\delta/c\), there will exist a price \(\val_j\) such that \(j \geq i^{*}\) achieving an empirical revenue exceeding \(c\). This implies that, with probability at most \(2\delta/c\), a price \(\val_j\) with induced welfare at most \(\delta\) will achieve an empirical revenue of at least a \(c\)-fraction of the optimal empirical revenue after a certain number of samples. Therefore, even under the seller's worst-case strategy, the probability remains at most \(2\delta/c\), which completes the proof of \Cref{lem:general_prob}.

\begin{proof}[Proof of \Cref{lem:general_multi_supp}]

Consider a new distribution \(\dist' = \left\{(\val_0', \prob_0'), (\val_1', \prob_1'), \dots, (\val_k', \prob_k')\right\}\) constructed as follows. Let \(\val_0' = 1\) and \(\prob_i' = \prob_i\) for all \(i \in \{0, 1, 2, \dots, k\}\). For each \(i \in \{1, 2, \dots, k\}\), define
\[
\val_i' = c \left\lceil \frac{\val_i}{c} \right\rceil.
\]

Given that \(\val_i > 1\) and \(c < 1\), it is evident that
\[
\val_i \leq \val_i' \leq \val_i + 1 \leq 2\val_i.
\]

For each \(i \in \{1, 2, \dots, k\}\), since \(\val_i \leq \val_i'\) and \(\prob_i = \prob_i'\), it follows directly that switching the distribution from \(\dist\) to \(\dist'\) can only increase the probability that there exist some \(t \in \mathbb{N}_{+}\) and \(j \geq i^{*}\) such that the empirical revenue of \(\val_j\) after \(t\) samples is at least \(c\). Notice that for distribution \(\dist'\), the quantity \(\val_i'/c\) is always an integer for \(i\in \{1,2,\ldots, k\}\). Therefore, applying \Cref{lem:integer_multi_supp} to \(\dist'\), it follows that such probability is at most 
\[
\frac{\sum_{j=i^{*}}^k \prob_i' \val_i'}{c} \leq \frac{2\sum_{j=i^{*}}^k \prob_i \val_i}{c} = \frac{2\wel\InParentheses{i^{*}}}{c}. \qedhere
\]
\end{proof}

\bibliographystyle{alpha}
\bibliography{ref}

\appendix

\section{Omitted Proofs}

\subsection{Proof Sketch of Revenue-Revenue and Welfare-Welfare Inapproximability}
\label{appendix:wwrr}
Here we sketch the proofs that for a $1$-EO (\cref{def:SellerEO}) seller with cost $c = 0$, we have that (1) the resulting revenue may not be a constant-factor approximation of the public-prior-case revenue, and (2) the resulting welfare may not be a constant-factor approximation of the public-prior-case welfare.
\begin{itemize}
    \item It is relatively easy to see that the resulting revenue may not approximate the optimal revenue. As an example, let the buyer's value be equal to $1$ with high probability and equal to $M^2$ with probability $\frac{1}{M}$ for a large number $M$. If the seller only gets a few samples, then she is likely to set the price at $1$ with revenue of $1$. However, the optimal revenue is $M^2 \cdot \frac{1}{M} = M$ if the seller knows the prior distribution of the buyer $\distB$.
    \item As for welfare approximation, consider a seller who always decides the price after getting exactly $k \gg M_m$ samples. Let $M_0 < M_1 < M_2 < \cdots < M_m$ be a sequence of numbers where $M_i$ is sufficiently large compared to $M_{i - 1}$, and $\ln M_0 \gg m$. Let the values be supported on
    \[
    S = \{1\} \cup [2, M_0] \cup \{M_1, M_2, \ldots, M_m\}.
    \]
    Define the buyer's value distribution $\distB$ (supported on $S$) in the following way.
    \begin{itemize}
        \item Posting a price of $p = M_i$ for $i \geq 1$ gives a revenue of $1 - \varepsilon$ for an infinitesimally small $\varepsilon > 0$.
        \item Posting a price of $p \in [2, M_0]$ gives a revenue of $\frac{1}{2}$.
    \end{itemize}
    We point out that a seller with full information of $\distB$ will set the price at $1$, resulting in a welfare of at least $\Omega(\log M_0)$. However, the $1$-EO seller with $k$ samples will post a price of at least $M_1$ with probability of about $1 - \frac{1}{m + 1}$, and if this happens, then the resulting welfare is at most $m$.
    To see why the probability is about $1 - \frac{1}{m + 1}$, note that $p \in [2, M_0]$ are unlikely to be the optimal price, and the revenue distributions at $p = 1$ and $p = M_1, M_2, \ldots, M_m$ are almost i.i.d.\@ when each $M_i \gg M_{i - 1}$ and $k \gg M_m$.
\end{itemize}

\subsection{Proof of \texorpdfstring{\Cref{thm:lower_bound}}{Proposition 4.3}}
\label{appendix:proof_of_lowerbound}

For any \(\delta > 0\) and \(c\in [\delta, 1]\), consider the distribution \(\dist\) defined as follows:  
\[
v \sim \dist, \quad v =
\begin{cases}
    2c & \text{with probability } \delta / 2c, \\
    1 & \text{otherwise}.
\end{cases}
\]

Observe that \(\dist\) has binary support, where the higher value is an integer multiple of \(c\). Applying \Cref{lem:binary_supp}, we conclude that with probability exactly \(\delta/c\), the empirical revenue of \(2c\) reaches \(c\) after a certain number of samples. Moreover, the welfare induced by the price \(2c\) is precisely \(\delta\). Thus, we have established that with probability exactly \(\delta/c\), the worst-case strategy selects a price with induced welfare less than \(\delta\), completing the proof.

\subsection{Proof of \texorpdfstring{\Cref{lem:discretize}}{Lemma 4.5}}
\label{appendix:proof_of_discretize}

Given an arbitrary (potentially non-discrete) distribution \(\dist\), we construct a discrete distribution \(\dist'\) by rounding up values of \(\dist\) to the nearest multiple of \(\varepsilon\). Formally, we define \( \tau \) as  
\[
\tau = \inf\{x \mid \wel(x) \leq \delta\}.
\]

Without loss of generality, we assume \(\wel(\tau) > \delta\). If instead \(\wel(\tau) \leq \delta\), the proof proceeds similarly with a modified discretization approach. Since \(\wel\) is non-increasing, it follows that for all \(x > \tau\), we have \(\wel(x) \leq \delta\).

Now we define the discretized distribution \(\dist'\) by rounding up the values above \(\tau\). 
\begin{align*}
    \Pr_{v\sim \dist'}[v = x] = \begin{cases}
        \Pr_{v\sim \dist}[v \leq \tau], & \text{if } x = 1, \\
        \Pr_{v\sim \dist}[x - \varepsilon < v \leq x], & \text{if } x = \tau + k\varepsilon ~\text{ for }~ k \in \mathbb{Z}_{+}, \\
        0, & \text{otherwise}.
    \end{cases}
\end{align*}

Define the welfare function with respect to the distribution \(\dist'\) as  
\[
\wel'\InParentheses{x} = \E_{\val \sim \dist'}\InBrackets{\val \cdot \indic{\val \geq x}}.
\]
By the definition of \(c\), it follows that  
\[
\wel'\InParentheses{\tau + \varepsilon} = (\tau + \varepsilon) \Pr_{\val \sim \dist}[\val > \tau] \leq \delta + \varepsilon.
\]
Similarly, since \(\wel'\InParentheses{x}\) is also non-increasing, any price greater than \(\tau + \varepsilon\) results in a welfare of at most \(\delta + \varepsilon\).

Next, we couple the sampling processes of \(\dist\) and \(\dist'\) as follows. Given a sample \(x\) from \(\dist\), if \(x \leq \tau\), we assume that the corresponding sample from \(\dist'\) is \(1\). Otherwise, we define  
\[
x' = \left\lceil \frac{x}{\varepsilon} \right\rceil \varepsilon
\]  
as the smallest multiple of \(\varepsilon\) that is at least \(x\). This construction ensures that sampling \(x'\) from \(\dist'\) is equivalent to the process described.

Define \(\empdistS{k}\) and \(\tilde{F}_s'^{(k)}\) as the complementary cumulative distribution functions of the empirical distributions after \(k\) samples. Suppose that after \(k\) samples, there exists a price \(x > \tau\) such that its empirical revenue satisfies  
\[
 x \cdot \empdistS{k}\InParentheses{x} \geq c.
\]

Next, define  
\[
x' = \left\lceil \frac{x}{\varepsilon} \right\rceil \varepsilon
\]  
as the smallest multiple of \(\varepsilon\) that is at least \(x\). By our coupled sampling procedure, the number of samples from \(\dist'\) that are at least \(x'\) must be no less than the number of samples from \(\dist\) that are at least \(x\). Consequently, we obtain  
\[
x' \cdot \tilde{F}_s'^{(k)}\InParentheses{x'} \geq x \cdot \empdistS{k}\InParentheses{x} \geq c.
\]

This implies that the probability of the existence of a price with induced welfare at most \(\delta + \varepsilon\) achieving an empirical revenue of at least \(c\) under \(\dist'\) is no less than the probability of the existence of a price with induced welfare at most \(\delta\) achieving an empirical revenue of at least \(c\) under \(\dist\). This completes the proof.
 
\end{document}